\documentclass[1p,times,sort&compress]{elsarticle}
\usepackage[utf8]{inputenc}
\usepackage[T1]{fontenc}
\usepackage{amsmath,amsfonts,amsthm,fixltx2e,tikz}

\DeclareMathOperator{\Ff}{Fact}
\DeclareMathOperator{\Pre}{Pref}
\DeclareMathOperator{\Su}{Suff}
\newcommand{\Nn}{\mathbb N}
\newcommand{\Ww}{\mathbf w}
\newcommand{\Ss}{\mathbf s}
\newcommand{\Cc}{\mathbf c}
\newcommand{\eps}{\varepsilon}
\newcommand{\wt}{\widetilde}

\newtheorem{thm}{Theorem}[section]
\newtheorem{prop}[thm]{Proposition}
\newtheorem{lemma}[thm]{Lemma}
\newtheorem{cor}[thm]{Corollary}
\theoremstyle{definition}
\newtheorem{example}[thm]{Example}
\newtheorem{remark}[thm]{Remark}

\begin{document}
\begin{frontmatter}
\title{Reversible Christoffel factorizations}
\author[tku]{Michelangelo Bucci}
\ead{michelangelo.bucci@utu.fi}

\author[nap]{Alessandro De Luca\corref{cor1}}
\ead{alessandro.deluca@unina.it}

\author[tku,lys]{Luca Q. Zamboni}
\ead{lupastis@gmail.com}

\cortext[cor1]{Corresponding author}

\address[tku]{Department of Mathematics, University of Turku, FI-20014 Turku, Finland}
\address[nap]{DIETI, Universit\`a degli Studi di Napoli Federico II\\via Claudio 21, 80125 Napoli, Italy}
\address[lys]{Universit\'e de Lyon, Universit\'e Lyon 1, CNRS UMR 5208 Institut Camille Jordan\\B\^atiment du Doyen Jean Braconnier, 43, blvd du 11 novembre 1918\\ F-69622 Villeurbanne Cedex, France}
\journal{Theoretical Computer Science}
\begin{abstract}
We define a family of natural decompositions of Sturmian words in Christoffel words, called \emph{reversible Christoffel} (RC) factorizations. They arise from the observation that two Sturmian words with the same language have (almost always) arbitrarily long Abelian equivalent prefixes. Using the three gap theorem
, we prove that in each RC factorization, only 2 or 3 distinct Christoffel words may occur. We begin the study of such factorizations, considered as infinite words over 2 or 3 letters, and show that in the general case they are either Sturmian words, or obtained by a three-interval exchange transformation.
\end{abstract}

\begin{keyword}
Sturmian word\sep Christoffel word\sep reversible Christoffel factorization\sep three-interval exchange transformation
\MSC[2010] 68R15
\end{keyword}
\end{frontmatter}

\section{Introduction}
In combinatorics on words and symbolic dynamics, it is often meaningful to look at two infinite words $\Ww ,\Ww' $ and determine the segments where they coincide, that is, locate maximal occurrences of factors $u$ such that $\Ww =pu\Ss $, $\Ww' =p'u\Ss'$ for some words $p$ and $p'$ of equal length and some infinite words $\Ss ,\Ss'$.

If $\Ww $ and $\Ww'$ are two fixed points of an irreducible Pisot substitution
$\varphi$, the \emph{strong coincidence conjecture} (proved by~\citet{bardi} in the binary case) states that there exists a letter $a$ and two factorizations
\begin{equation}
\label{eq:pas}
\Ww =pa\Ss ,\quad \Ww' =p'a\Ss'
\end{equation}
such that $p$ and $p'$ are Abelian equivalent, i.e., an ``anagram'' of each other. This has two remarkable consequences:
\begin{enumerate}
	\item\label{prox} $\Ww $ and $\Ww'$ agree on arbitrarily long segments (as defined above), i.e., they are \emph{proximal},
	\item\label{Abcom} $\Ww $ and $\Ww'$ have arbitrarily long Abelian equivalent prefixes; in short, we say that they are \emph{Abelian comparable}.
\end{enumerate}
Indeed, from~\eqref{eq:pas} it follows
$\Ww =\varphi^{n}(p)\varphi^{n}(a)\varphi^{n}(\Ss )$ and
$\Ww' =\varphi^{n}(p')\varphi^{n}(a)\varphi^{n}(\Ss')$ for all $n\geq 0$.

This induces two factorizations (\emph{comparison})
\begin{equation}
\label{eq:Abcom}
\begin{split}
\Ww\phantom' &=x_{1}x_{2}\cdots x_{n}\cdots\\
\Ww '&=x_{1}'x_{2}'\cdots x_{n}'\cdots\\
\end{split}
\end{equation}
defined so that each pair of Abelian equivalent prefixes of $\Ww $ and $\Ww'$ is
$(x_{1}\cdots x_{k},x_{1}'\cdots x_{k}')$ for some $k\geq 0$; equivalently, for all $k>0$, $x_{k}$ and $x_{k}'$ are the shortest nonempty Abelian equivalent prefixes of the infinite words $x_{k}x_{k+1}\cdots$ and $x_{k}'x_{k+1}'\cdots$. 

\bigskip
In this paper (a preliminary version of which was presented at the first RuFiDiM~\cite{rufidim}), we look at Sturmian words over $A=\{0,1\}$ from a similar point of view. We recall that an infinite word over $A$ is \emph{Sturmian} if it has exactly $n+1$ distinct factors of each length $n\geq 0$. The first systematic study of Sturmian sequences is usually credited to \citet{mhsturm}, from the point of view of symbolic dynamics.
In fact, every Sturmian word can be realized either as a \emph{lower mechanical word} or as an \emph{upper} one. The lower (resp.~upper) mechanical word $\Ss _{\alpha,\rho}$ (resp.~$\Ss _{\alpha,\rho}'$) of \emph{slope} $\alpha$ and \emph{intercept} $\rho$, with 
$0\leq\alpha,\rho<1$, is the infinite word indexed over $\Nn$ whose
$n$-th letter is 0 if 
\[\{n\alpha+\rho\}<1-\alpha\quad
\text{ (resp.~if }0<\{n\alpha+\rho\}\leq 1-\alpha\text{)}\]
and 1 otherwise (denoting by $\{\sigma\}$ the \emph{fractional part} $\sigma-\lfloor\sigma\rfloor$ of the real number $\sigma$). Thus mechanical words encode \emph{rotations} by angle $2\pi\alpha$ on a circle, and 
$\alpha$ gives the frequency of the letter $1$ in the infinite word. Moreover, the slope determines the \emph{language} (set of factors).
Among binary words, mechanical words are characterized by the \emph{balance} property: the number of occurrences of the letter $1$ in two factors of the same length may differ at most by 1.

When $\alpha\notin\mathbb Q$, the word
$\Ss _{\alpha,\alpha}=\Ss _{\alpha,\alpha}'=:\Cc $ is said to be the \emph{characteristic} (or \emph{standard}) Sturmian word of slope $\alpha$. Its prefixes are exactly all \emph{left special} factors of the  Sturmian words of slope $\alpha$, i.e., $p\in\Pre(\Cc )$ if and only if $0p,1p\in\Ff(\Ss_{\alpha,\rho})$ for any $\rho$.
We say that a Sturmian word is \emph{singular} if it contains the characteristic word (of the same slope) as a proper suffix.
By definition, we have
\[\Ss _{\alpha,0}=0\Cc\quad\text{ and }\quad\Ss_{\alpha,0}'=1\Cc \;;\]
any other singular Sturmian word can be written as $\tilde p01\Cc $ or $\tilde p10\Cc $, where $p\in\Pre(\Cc )$. It is easy to see that every nonsingular Sturmian word is both an upper mechanical word and a lower one.

Mechanical words of irrational slope are exactly all Sturmian words, whereas those of rational slope are periodic words; when $\alpha\in\mathbb Q$, the shortest $v$ such that $\Ss _{\alpha,0}=v^{\omega}$ 
(resp.~$\Ss _{\alpha,0}'=v^{\omega}$) is the \emph{lower} (resp.~\emph{upper}) \emph{Christoffel word} of slope $\alpha$. It is well-known (cf.~\cite{bersdelu}) that the set of lower (resp.~upper) Christoffel words can be characterized as
$A\cup 0\mathcal{P} 1$ (resp.~$A\cup 1\mathcal P 0$), where 
$\mathcal P$ is the set of \emph{central words}, i.e., words $u$ such that both 
$0u1$ and $1u0$ are factor of some Sturmian word $s$. Central words are exactly all palindromic prefixes of characteristic Sturmian words. 
The following characterization of central words is well-known.
\begin{prop}[See \citet{delustr}]
A word $u\in A^{*}$ is central if and only if it is a \emph{palindrome} satisfying \emph{one} of the following conditions:
\begin{enumerate}
	\item $u$ is a power of a letter, i.e., $u\in 0^{*}\cup 1^{*}$, or
	\item $u=p01q=q10p$ for some palindromes $p,q\in A^{*}$.
\end{enumerate}
In the latter case, the words $p,q$ are central too, and uniquely determined; one of them is the longest palindromic prefix (and suffix) of $u$. 
\end{prop}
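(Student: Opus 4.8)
The plan is to establish the two implications separately and then derive the final (``moreover'') assertion from the forward direction together with a brief uniqueness argument. Throughout I use the facts recalled above: a word $w$ is central if and only if both $0w1$ and $1w0$ are factors of some Sturmian word --- equivalently, $0w1$ is a lower and $1w0$ an upper Christoffel word, both of slope $(|w|_1+1)/(|w|+2)$ --- and every palindromic prefix of a central word is again central.

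\emph{From (1) or (2) to centrality.} If $u\in 0^*\cup 1^*$ then $0u1$ and $1u0$ are balanced, so $u$ is central. Suppose instead $u=p01q=q10p$ with $p,q$ palindromes; reversing $p01q$ gives $\wt q\,10\,\wt p=q10p=u$, so $u$ is a palindrome. Since $p$ and $q$ are both prefixes of $u$, the two factorisations show that $|p|+2$ and $|q|+2$ are periods of $u$, with sum $|u|+2$; and they are coprime, for a common divisor $g\ge 2$ would give $u$ a period $g$ by Fine and Wilf's theorem (the hypothesis $|u|\ge(|p|+2)+(|q|+2)-g$ holds), forcing the letters of $u$ in positions $|p|+2$ and $|q|+2$ to coincide --- whereas they are $1$ and $0$. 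Hence $u$ is a palindrome with coprime periods summing to $|u|+2$, so $u$ is central by the classical Fine--Wilf characterisation of central words (which can itself be proved by the same kind of induction).

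\emph{From centrality to (1) or (2).} Let $u$ be central with both letters occurring; up to exchanging $0$ and $1$ (which preserves centrality and the shape of the claimed decomposition) I may assume $u$ begins, hence ends, with $0$. Then $u01$ is a standard word (the standard words of length $\ge 2$ being precisely the $P01$ and $P10$ with $P$ central), and I would take its standard factorisation $u01=AB$ into two standard words, both of length $\ge 2$ --- this is where ``$u$ is not a power of a letter'' enters. Thus $A=P_A\,xy$ and $B=P_B\,x'y'$ with $P_A,P_B$ central and $xy,x'y'\in\{01,10\}$. The classical reversal property of the standard factorisation now says that $AB$ and $BA$ agree except in their last two letters, which get transposed, and that their common prefix, of length $|AB|-2$, is central; applied to $AB=u01$ it yields that $u$ is this common prefix and that $BA=u10$. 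Then $x'y'=01$ (the last two letters of $u01$) and $xy=10$ (the last two letters of $BA=u10$), so deleting the last two letters of $AB$ and of $BA$ gives $u=A\,P_B=P_A\,10\,P_B$ and $u=B\,P_A=P_B\,01\,P_A$. Hence $u=p01q=q10p$ with $p:=P_B$ and $q:=P_A$ central, which is (2); the longer of $p,q$ is a palindromic prefix of $u$, in fact the longest proper one, since the palindromic prefixes of $u$ form a chain and this is its penultimate term.

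\emph{Uniqueness and the main difficulty.} For the ``moreover'' clause, if $u=p'01q'=q'10p'$ with $p',q'$ palindromes then $\{|p'|+2,\,|q'|+2\}$ is again a pair of coprime periods of $u$ summing to $|u|+2$, and such a pair is unique: from a second one, repeated use of Fine and Wilf's theorem produces ever smaller common periods and finally period $1$, contradicting that $u$ is non-constant. So $p',q'$ are the prefixes of $u$ of the prescribed lengths, hence equal $p,q$, and in particular are central. I expect the main obstacle to be the forward direction --- pinpointing $u01$ as the relevant standard word, selecting its standard factorisation, and invoking the reversal property so that the two ``bridging'' digrams come out as exactly $01$ and $10$ (the step where the balance of $0u1$ is really used); by contrast the converse and the uniqueness argument should be routine once the Fine--Wilf machinery is set up.
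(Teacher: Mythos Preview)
The paper does not supply its own proof of this proposition: it is stated with the attribution ``See \citet{delustr}'' and used as a black box, so there is nothing in the paper to compare your argument against.

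That said, your outline is a sound reconstruction of the classical proof. The backward implication via the two borders $p,q$, the resulting coprime periods $|p|+2$ and $|q|+2$ summing to $|u|+2$, and the Fine--Wilf characterisation of central words is standard and correct. For the forward implication, your use of the standard word $u01$ and its standard factorisation $AB$, together with the commutation property $BA=u10$, is exactly the approach in de~Luca's original paper; the point that both $|A|,|B|\ge 2$ when $u$ is not a single-letter power is justified because these lengths are the two coprime periods of $u$, and a word containing both letters has least period at least $2$. Your uniqueness sketch is a little terse: the cleanest way to finish it is to observe that the smaller element of any such coprime pair must equal the least period of $u$ (else Fine--Wilf, applied first with the least period and the smaller element, then with the least period and the larger element, yields period $1$), which forces the two pairs to coincide. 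With that small tightening, the argument is complete.
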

By iterated application of the previous result, one easily obtains:
\begin{cor}
\label{thm:cent}
Let $v$ be a proper palindromic prefix of a central word $u$, such that
$v\notin 0^{*}\cup 1^{*}$. Then either $v01$ or $v10$ is a prefix of $u$.
\end{cor}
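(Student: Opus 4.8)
The plan is to argue by strong induction on $|u|$, feeding each reduction back into the structural Proposition quoted above. The base case is vacuous: a proper palindromic prefix $v\notin 0^{*}\cup 1^{*}$ must itself have length at least $3$, so a central word admitting such a $v$ has length at least $4$; for shorter central words there is nothing to prove.

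For the inductive step, fix a central word $u$ with a proper palindromic prefix $v\notin 0^{*}\cup 1^{*}$, and assume the statement for all central words shorter than $u$. First observe that $u\notin 0^{*}\cup 1^{*}$, since $v$ occurs in $u$ and contains both letters. Hence $u$ falls under case~(2) of the Proposition: $u=p01q=q10p$ with $p,q$ central palindromes, one of which --- call it $w$ --- is the longest (necessarily proper, as $|p|,|q|<|u|$) palindromic prefix of $u$. Since $v$ is a proper palindromic prefix of $u$ and $w$ is the longest such, $|v|\le|w|$; and as $v$ and $w$ are both prefixes of $u$, $v$ is a prefix of $w$. Now split into two cases. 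If $|v|=|w|$ then $v=w\in\{p,q\}$: either $v=q$, and then $u=q10p=v10p$ has $v10$ as a prefix, or $v=p$, and then $u=p01q=v01q$ has $v01$ as a prefix. If instead $|v|<|w|$, then $v$ is a proper palindromic prefix of the central word $w$ (note $w\notin 0^{*}\cup 1^{*}$, since $v$ is one of its prefixes), and $|w|<|u|$, so the induction hypothesis gives that $v01$ or $v10$ is a prefix of $w$, and therefore of $u$. This exhausts the cases.

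I do not expect a genuine obstacle here --- this is precisely the ``iterated application'' of the Proposition announced in the text. The one point that must be used carefully is the final clause of the Proposition, namely that one of $p,q$ is the \emph{longest} palindromic prefix of $u$: this is exactly what forces $|v|\le|w|$ and hence lets $v$ sit inside a strictly shorter central word on which to recurse. A minor secondary point is to keep the boundary case $v=w$ (where $v01$ or $v10$ is read off directly from the two factorizations $u=p01q=q10p$) cleanly separated from the strictly interior case $|v|<|w|$, which is the one that actually invokes the induction hypothesis.
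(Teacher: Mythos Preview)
Your argument is correct and is precisely the ``iterated application'' of the Proposition that the paper alludes to without spelling out; the key point you identify --- that one of $p,q$ is the longest proper palindromic prefix of $u$, forcing $|v|\le|w|$ --- is exactly what drives the recursion. The parenthetical remark that $w\notin 0^{*}\cup 1^{*}$ is harmless but unnecessary: the induction hypothesis only needs $v\notin 0^{*}\cup 1^{*}$, which is already given.
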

Since central words are palindromes, we have that $v=0u1$ is a lower Christoffel word if and only if its \emph{reversal} $\tilde v=1u0$ is an upper Christoffel word. We recall that any nontrivial (i.e., longer than a letter) Christoffel word can be uniquely written as a product of two (shorter) Christoffel words. All pairs $(u,v)$ of lower Christoffel words such that $uv$ is Christoffel make up the \emph{Christoffel tree} (cf.~\citet{bersdelu}; see Figure~\ref{fig:chr}) where the pair $(0,1)$ is the root%
\footnote{Using $(1,0)$ instead, all \emph{upper} Christoffel pairs are obtained.}, and every node $(u,v)$ has the two children $(u,uv)$ and $(uv,v)$. Moreover, all lower Christoffel factors of an infinite Sturmian word are found on an infinite path on the tree.
\begin{figure}[hbt]
\centering
\begin{tikzpicture} [level distance=12mm, 
level/.style={sibling distance=120mm/(2^#1)}, 
level 4/.style={level distance=10mm}
]
\node {$(0,1)$}
	child {node {$(0,01)$}
		child {node {$(0,001)$}
			child {node {$(0,0001)$}
				child {node {\vdots}}
				child {node {\vdots}}
			}
			child {node {$(0001,001)$}
				child {node {\vdots}}
				child {node {\vdots}}
			}
		}
		child {node {$(001,01)$}
			child 
				{node {\vdots}}
			child {node {$(00101,01)$}
				child {node {\vdots}}
				child {node {\vdots}}
			}
		}
	}
	child {node {$(01,1)$}
		child {node {$(01,011)$}
			child 
				{node {\vdots}}
			child {node {$(01011,011)$}
				child {node {\vdots}}
				child {node {\vdots}}
			}
		}
		child {node {$(011,1)$}
			child 
				{node {\vdots}}
			child {node {$(0111,1)$}
				child {node {\vdots}}
				child {node {\vdots}}
			}
		}
	};
\end{tikzpicture}
\label{fig:chr}
\caption{The Christoffel tree}
\end{figure}
Thus, in particular, the following properties hold:
\begin{prop}
\label{thm:cmix}
Let $u,v,z$ be Christoffel words such that $z=uv$, and let $\Ww$ be a Sturmian word with $z\in\Ff(\Ww)$. Then:
\begin{enumerate}
\item $u^2v$ and $uv^{2}$ are Christoffel words too, and exactly one of them is a factor of $\Ww$;
\item if $\{u,v\}\neq\{0,1\}$, then either $u$ is a prefix of $v$, or $v$ is a suffix of $u$.
\end{enumerate}
\end{prop}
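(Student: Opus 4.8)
The plan is to read both items off the structure of the Christoffel tree, using the unique factorization of a nontrivial Christoffel word as a product of two Christoffel words: since $z=uv$ with $u,v,z$ all Christoffel, the pair $(u,v)$ is exactly one node of the tree, and $z$ is the word attached to it. Item~2 is then almost immediate. The hypothesis $\{u,v\}\neq\{0,1\}$ just says that $(u,v)$ is not the root (no other node can have $\{u,v\}=\{0,1\}$, since $10$ is not a Christoffel word), so $(u,v)$ is a child of some node $(a,b)$; by the definition of the tree, either $(u,v)=(a,ab)$, whence $u=a$ is a prefix of $v=ab$, or $(u,v)=(ab,b)$, whence $v=b$ is a suffix of $u=ab$. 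Likewise, the first assertion of item~1 is built into the tree: the children of $(u,v)$ are the Christoffel pairs $(u,uv)$ and $(uv,v)$, whose products are precisely the Christoffel words $u^{2}v$ and $uv^{2}$.

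The one point that needs an argument is ``exactly one of $u^{2}v,uv^{2}$ is a factor of $\Ww$''. For this I would first pin down the set $S_{\Ww}$ of nodes $(a,b)$ of the tree with $ab\in\Ff(\Ww)$, and claim it is a full infinite path from the root. It is contained in a single infinite path --- this is the quoted property of the tree. It is closed under taking parents, because the parent of a non-root node $(a,b)$ carries either the word $a$ or the word $b$, each of which is a factor of $ab\in\Ff(\Ww)$. And it is infinite, because the palindromic prefixes $u$ of the characteristic word of the slope of $\Ww$ are arbitrarily long and each $0u1$ is a Christoffel word lying in $\Ff(\Ww)$ (since $0u,1u\in\Ff(\Ww)$ and $0u1$ occurs in some, hence in every, Sturmian word of that slope). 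An infinite, parent-closed subset of an infinite path is the whole path, which proves the claim.

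With this in hand, item~1 follows: $z=uv\in\Ff(\Ww)$ places the node $(u,v)$ on $S_{\Ww}$, and a path passes through exactly one child of each of its nodes, so exactly one of $(u,uv)$, $(uv,v)$ lies on $S_{\Ww}$ while the other (its sibling, off the path) does not; equivalently, exactly one of $u^{2}v$, $uv^{2}$ belongs to $\Ff(\Ww)$. Apart from this, everything is routine bookkeeping on the tree; I expect the only genuine (and still modest) obstacle to be the upgrade of the quoted ``contained in an infinite path'' to ``equals an infinite path'', which is exactly where parent-closedness and the infinitude via central words come in.
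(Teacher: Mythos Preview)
The paper does not give a standalone proof of this proposition: it is stated immediately after the description of the Christoffel tree with the words ``Thus, in particular, the following properties hold,'' so the intended argument is precisely the tree-based reading you carry out. Your proposal makes this explicit and is correct; in particular, your upgrade of ``contained in an infinite path'' to ``equals the full infinite path'' via parent-closedness and the existence of arbitrarily long Christoffel factors (through central prefixes of $\Cc$) is exactly the missing bookkeeping the paper leaves to the reader.
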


For more information on Sturmian and Christoffel words, we refer the reader to~\cite{ch2acow,blrs}.

\medskip
In the next sections, we shall deal with the comparison of Sturmian words. We begin by proving (Proposition~\ref{thm:abcom}) that all pairs of Sturmian words of the same slope, except one, are Abelian comparable.
Our main result (Theorem~\ref{thm:main}) shows that 
for Abelian comparable Sturmian words $\Ww$ and $\Ww'$, the factorizations in~\eqref{eq:Abcom} have at most 3 distinct terms, i.e.,
the sets $\{x_{i}\}_{i>0}$ and $\{x_{i}'\}_{i>0}$ have cardinality 2 or 3; furthermore, all such terms are Christoffel words.
Finally, we shall examine the structure of these factorizations.

\section{Comparison of Sturmian words: RC factorizations}

Trivially, if two Sturmian words are Abelian comparable then they have the same slope (and hence the same language).
The following proposition shows that the converse holds too, with a single exception.
\begin{prop}
\label{thm:abcom}
Let $\Ww ,\Ww '$ be Sturmian words of slope $\alpha$, and $\Cc =\Ss _{\alpha,\alpha}$ be the characteristic word. If $\{\Ww ,\Ww '\}\neq\{0\Cc ,1\Cc \}$, then $\Ww $ and $\Ww'$ are Abelian comparable.
\end{prop}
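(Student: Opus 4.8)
The plan is to analyze the two Sturmian words in terms of their mechanical representations and reduce to comparing the sequences of positions visited under the rotation by $\alpha$. Write $\Ww$ and $\Ww'$ as mechanical words of slope $\alpha$ with intercepts $\rho$ and $\rho'$ respectively (choosing lower or upper representation as convenient; recall that every nonsingular Sturmian word is both). A prefix of length $n$ of $\Ww$ has a number of $1$'s equal to $\sum_{k=0}^{n-1}[\text{$1$ occurs at position }k]$, which by the mechanical definition counts the indices $k<n$ for which $\{k\alpha+\rho\}$ lies in the interval corresponding to the letter $1$, of length $\alpha$. Thus the number of $1$'s in the prefix of length $n$ is $\lfloor n\alpha+\rho\rfloor - \lfloor\rho\rfloor$ or a closely related quantity; in any case it differs from $n\alpha$ by a bounded amount, and the difference of the $1$-counts of the two prefixes of length $n$ equals $\lfloor n\alpha+\rho\rfloor-\lfloor n\alpha+\rho'\rfloor$ up to a fixed constant. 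Two prefixes of length $n$ are Abelian equivalent precisely when these $1$-counts coincide, so the task is to show that this difference vanishes for infinitely many $n$.

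Next I would make this concrete. Assuming without loss of generality $0\le\rho<\rho'<1$ (after possibly swapping $\Ww,\Ww'$), the difference of $1$-counts at length $n$ is an integer that equals, up to an additive constant $c\in\{0,1\}$ depending only on $\rho,\rho'$, the count of indices $k<n$ with $\{k\alpha+\rho\}\in[1-\rho'{+}\rho,1)$ minus something symmetric — more cleanly, it is controlled by whether the orbit point $\{n\alpha+\rho\}$ has ``crossed'' the arc $[\rho,\rho')$ (or its complement). The key point is that since $\alpha$ is irrational, the orbit $\{n\alpha+\rho\}_{n\ge 0}$ is equidistributed and in particular returns infinitely often to any nonempty arc; choosing $n$ so that $\{n\alpha+\rho\}$ lands in an appropriate small arc near the ``reset'' point forces the running difference back to the value it would need for Abelian equivalence. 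So for infinitely many $n$ the prefixes of length $n$ have equal numbers of $1$'s, hence (being both balanced, i.e.\ factors of a Sturmian word of the same slope, so Abelian equivalence of equal-length factors is determined by the $1$-count) they are Abelian equivalent.

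I would then address the exceptional and boundary cases. If $\{\Ww,\Ww'\}=\{0\Cc,1\Cc\}$, these are $\Ss_{\alpha,0}$ and $\Ss'_{\alpha,0}$, whose length-$1$ prefixes are $0$ and $1$ — not Abelian equivalent — and whose length-$n$ prefixes differ in $1$-count by exactly $1$ for every $n$ (this is the content of $\Ss_{\alpha,0}=0\Cc$, $\Ss'_{\alpha,0}=1\Cc$ together with $\Cc$ being common): this is exactly why the exception is needed, and it shows the hypothesis is sharp. For every other pair, at least one of $\rho,\rho'$ can be taken strictly between $0$ and $1$ with the two intercepts not in the ``antipodal'' configuration that pins the difference at a nonzero constant, so the equidistribution argument applies; singular words require picking the representation (upper vs.\ lower, i.e.\ the convention at the two orbit points that hit the endpoints of the defining interval) that avoids the degenerate alignment.

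The main obstacle I expect is the careful bookkeeping of the additive constant $c$ and of the half-open-versus-closed interval conventions for upper and lower mechanical words: one must verify that outside the single excepted pair one can always choose representations of $\Ww$ and $\Ww'$ so that the constant obstruction $c$ is $0$ (equivalently, so that the two ``jump arcs'' for the $1$-count coincide rather than being disjoint), after which recurrence of the irrational rotation does the rest. Phrasing everything in terms of the three-distance / rotation picture, rather than manipulating floor functions directly, should keep this manageable.
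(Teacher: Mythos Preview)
Your approach is analytic/dynamical: compute the $1$-counts of prefixes via floor (or ceiling) functions and use density of the irrational rotation to find infinitely many $n$ where the counts match. This can be made to work, but it is genuinely different from the paper's argument, which is short and purely combinatorial. The paper observes that if $\Ww,\Ww'$ had \emph{no} nonempty Abelian equivalent prefixes, then by balance $\{\Ww,\Ww'\}=\{0\mathbf t,1\mathbf t\}$ for some infinite $\mathbf t$, and since every prefix of $\mathbf t$ is then left special, $\mathbf t=\Cc$, the excluded case. Having found shortest such prefixes $x_1,x_1'$, one checks that the remaining suffixes $\Ww_1,\Ww_1'$ again avoid the forbidden pair (otherwise $\{x_1,x_1'\}=\{\tilde p\,0,\tilde p\,1\}$, not Abelian equivalent), and iterates.

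The paper's proof thus sidesteps all of the intercept bookkeeping you anticipate. Your route, beyond the easy nonsingular case (both words lower mechanical; the difference $\lfloor n\alpha+\rho'\rfloor-\lfloor n\alpha+\rho\rfloor$ vanishes whenever $\{n\alpha+\rho\}<1-(\rho'-\rho)$), needs a genuine case split when singular words appear: $0\Cc$ and each $\tilde p\,10\,\Cc$ are \emph{only} lower mechanical, while $1\Cc$ and each $\tilde p\,01\,\Cc$ are \emph{only} upper, so some pairs force a floor--ceiling comparison. These mixed cases still produce a nonempty target arc (the only pair for which the arc degenerates is $\{0\Cc,1\Cc\}$), but you have correctly identified, without yet carrying out, exactly the verification that makes your argument longer than the paper's. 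What your approach would buy is an explicit description of \emph{which} lengths $n$ give Abelian equivalent prefixes (those where the rotation lands in a specified arc), information the combinatorial proof does not directly provide.
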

\begin{proof}
Let $x_{1}$ and $x_{1}'$ be the shortest nonempty prefixes (of $\Ww $ and $\Ww'$ respectively) which are Abelian equivalent. These are well defined; in fact, suppose by contradiction that $\Ww $ and $\Ww'$ have no Abelian equivalent prefixes except $\eps$. By the balance property, it follows $\{\Ww,\Ww'\}=\{0\mathbf t,1\mathbf t\}$ for some infinite word $\mathbf t$; as all prefixes of $\mathbf t$ are left special, we get $\mathbf t=\Cc$, contradicting our hypothesis.

Let then $\Ww=x_{1}\Ww_{1}$ and $\Ww'=x_{1}'\Ww_{1}'$ for some Sturmian words 
$\Ww_{1},\Ww_{1}'$ having the same language as $\Ww $ and $\Ww'$. We have
$\{\Ww_{1},\Ww_{1}'\}\neq\{0\Cc ,1\Cc \}$, for otherwise we would obtain $\{\Ww,\Ww'\}=\{\tilde p10\Cc ,\tilde p01\Cc \}$ for some $p\in\Pre(\Cc)$, and then $\{x_{1},x_{1}'\}=\{\tilde p1,\tilde p0\}$, which is absurd as $x_{1}$ and $x_{1}'$ are Abelian equivalent.
Hence we can iterate this argument to get infinitely many Abelian equivalent prefixes of $\Ww $ and $\Ww'$.
\end{proof}

\begin{example}
Let $\alpha=(3-\sqrt{5})/2$. The \emph{Fibonacci word}
\[\mathbf f=\Ss_{\alpha,\alpha}=010010100100101001010010010100100101001010010010100101001001\cdots\]
is the most famous Sturmian word. The Abelian comparison of the words $\mathbf f$ and $\mathbf{f'}:=\Ss_{\alpha,4/5}$ is:
\[
\begin{array}%
{@{\extracolsep{-0.2em}}rl|c|c|c|c|c|c|c|c|c|c|c|c|c|c|c|c|c|c|c|c}
\mathbf f\phantom' &=
01 &0 &01 &01 &0 &01 &001 &01 &0 &01 &01 &0 &01 &0 &01 &01 &0 &01 &001 &01 &\cdots\\ \mathbf{f'} &=
10 &0 &10 &10 &0 &10 &100 &10 &0 &10 &10 &0 &10 &0 &10 &10 &0 &10 &100 &10 &\cdots\end{array}
\]
\end{example}

\medskip
Let $\Ww,\Ww'$ be two Sturmian words having the same language, and suppose
$\{\Ww,\Ww'\}\neq\{0\Cc ,1\Cc \}$ where $\Cc$ is the characteristic word with the same slope.
By Proposition~\ref{thm:abcom}, $\Ww $ and $\Ww'$ are then Abelian comparable; let their comparison be given by~\eqref{eq:Abcom}.

By definition, $x_{i}$ and $x_{i}'$ are Abelian equivalent for all $i\geq 1$. By the balance property, it follows either $x_{i}=x_{i}'\in A$, or $\{x_{i},x_{i}'\}=\{0u1,1u0\}$ for some factor $u$ of $\Ww $, which is then a central word. We conclude that in all cases, $x_{i}$ and $x_{i}'$ are Christoffel words, with $x_{i}'=\wt{x_{i}}$. Thus we can write:
\begin{equation}
\label{eq:RC}
\begin{split}
\Ww\phantom' &=x_{1}x_{2}\cdots x_{n}\cdots,\\
\Ww' &=\wt{x_{1}}\wt{x_{2}}\cdots \wt{x_{n}}\cdots\;.
\end{split}
\end{equation}

Conversely, if $(x_{n})_{n>0}$ is a sequence of Christoffel words such that both infinite words in~\eqref{eq:RC} are Sturmian, then the Abelian comparison of $\Ww $ and $\Ww'$ yields exactly the same factorizations.

This motivates the following definition: we call \emph{reversible Christoffel (RC) factorization} of a Sturmian word $\Ww $ any infinite sequence $(x_{k})_{k>0}$ of Christoffel words such that
\begin{enumerate}
\item $\Ww=x_{1}x_{2}\cdots x_{n}\cdots$, and
\item $\Ww':=\wt{x_{1}}\wt{x_{2}}\cdots\wt{x_{n}}\cdots$ is a Sturmian word.
\end{enumerate}
We also say that $(x_{k})_{k>0}$ is \emph{the} RC factorization of $\Ww $ \emph{determined by} $\Ww'$.

A trivial RC factorization is obtained by choosing all $x_{k}$'s to be single letters, so that $\Ww'=\Ww$. The definition implies that every choice of $\Ww'$ determines a distinct factorization of $\Ww $; this proves the following statement.

\begin{prop}
Every Sturmian word admits uncountably many distinct RC factorizations.
\end{prop}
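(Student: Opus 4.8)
The plan is to realize every RC factorization of a fixed Sturmian word $\Ww$ as the one determined by a suitable Sturmian word $\Ww'$, and then to count how many such $\Ww'$ there are. So fix a Sturmian word $\Ww$ of slope $\alpha$ and let $\Cc=\Ss_{\alpha,\alpha}$ be the characteristic word of that slope. The first ingredient is the classical fact that there are uncountably many Sturmian words of slope $\alpha$: for instance the mechanical words $\Ss_{\alpha,\rho}$, $\rho\in[0,1)$, provide such a family, since for all but countably many values of $\rho$ the word $\Ss_{\alpha,\rho}$ is nonsingular, and distinct intercepts then yield distinct words (the set of positions $n$ at which the orbit point $\{n\alpha+\rho\}$ lies in $[1-\alpha,1)$ determines $\rho$). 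In particular, the set $S$ of Sturmian words $\Ww'$ of slope $\alpha$ with $\{\Ww,\Ww'\}\neq\{0\Cc,1\Cc\}$ is still uncountable, since that condition excludes at most one word.

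Next I would assign to each $\Ww'\in S$ the RC factorization of $\Ww$ that it determines. This is legitimate by Proposition~\ref{thm:abcom} together with the discussion preceding the definition of RC factorization: since $\{\Ww,\Ww'\}\neq\{0\Cc,1\Cc\}$, the words $\Ww$ and $\Ww'$ are Abelian comparable, and their Abelian comparison has the form~\eqref{eq:RC} with each $x_{k}$ a Christoffel word whose reversal $\wt{x_{k}}$ is the corresponding factor of $\Ww'$; hence $(x_{k})_{k>0}$ is an RC factorization of $\Ww$, with $\Ww'=\wt{x_{1}}\wt{x_{2}}\cdots$.

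Finally I would check that the map $\Ww'\mapsto(x_{k})_{k>0}$ thus defined is injective, which is immediate: from any RC factorization $(x_{k})_{k>0}$ of $\Ww$ one recovers its determining word as $\wt{x_{1}}\wt{x_{2}}\cdots$, so two distinct elements of $S$ cannot give the same sequence. This exhibits an injection of the uncountable set $S$ into the set of RC factorizations of $\Ww$, which is therefore uncountable as well. The only step that is not purely formal is the cardinality statement about Sturmian words of a fixed slope, but this is well known (cf.\ \cite{ch2acow,blrs}); everything else follows directly from the material already developed in this section.
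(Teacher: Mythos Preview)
Your argument is correct and is essentially the same as the paper's: the paper observes (in the paragraph immediately preceding the proposition) that every admissible choice of $\Ww'$ determines a distinct RC factorization of $\Ww$, and since there are uncountably many Sturmian words of a given slope, the result follows. You have simply spelled out the injectivity and the cardinality claim more explicitly than the paper does.
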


The following result is an immediate consequence of the balance property:
\begin{prop}
\label{thm:uplow}
Let $\Ww $ be a Sturmian word, and $(x_{k})_{k>0}$ be any RC factorization of $\Ww $. Then the terms $x_{k}$, $k\geq 1$, are either \emph{all} upper Christoffel words, or they are all lower Christoffel words.
\end{prop}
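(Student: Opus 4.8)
The plan is to reduce to the case $\alpha<1/2$, where $\alpha$ denotes the slope of $\Ww$, and then exploit the elementary fact that a Sturmian word of such a slope cannot contain $11$ as a factor. The reduction is carried out by the letter-exchange $0\leftrightarrow 1$: it sends Sturmian words of slope $\alpha$ to Sturmian words of slope $1-\alpha$, interchanges lower and upper Christoffel words (central words being stable under this exchange, by their characterization via $0u1,1u0$), and sends an RC factorization of $\Ww$ to one of the complementary word; so no generality is lost. Assuming $\alpha<1/2$, I would first note that $11\notin\Ff(\Ww)$: the letter $0$ has frequency $1-\alpha>1/2$, so $00\in\Ff(\Ww)$, and $11\in\Ff(\Ww)$ would then violate the balance property on the length-$2$ factors $00$ and $11$. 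Since $\Ww$ and $\Ww'$ share the same slope, hence the same language, we also get $11\notin\Ff(\Ww')$.

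Now suppose for contradiction that $(x_k)_{k>0}$ contains a non-trivial lower Christoffel word \emph{and} a non-trivial upper one. Recall that a non-trivial lower (resp.\ upper) Christoffel word begins with $0$ and ends with $1$ (resp.\ begins with $1$ and ends with $0$), while the only trivial Christoffel words are the two letters. Consider the subsequence of $(x_k)$ formed by the non-trivial terms: by assumption it is not ``constant in type'', so it has two consecutive entries of opposite type. Up to exchanging the roles of $\Ww$ and $\Ww'$ — which replaces $(x_k)$ by $(\wt{x_k})$ and swaps lower with upper while preserving all hypotheses — I may assume these two entries are $x_i$, lower, and $x_j$, upper, with $i<j$ and with $x_{i+1},\dots,x_{j-1}$ (if any) all single letters.

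Write $x_i=0u1$ and $x_j=1v0$ with $u,v$ central, and set $w=x_{i+1}\cdots x_{j-1}\in A^{*}$. Each $x_k$ with $i<k<j$ is a letter, so $\wt{x_k}=x_k$; hence $\Ww$ contains the factor $x_i\,w\,x_j=0u1\,w\,1v0$ while $\Ww'$ contains $\wt{x_i}\,w\,\wt{x_j}=1u0\,w\,0v1$. Looking at the two central portions, $\Ww$ contains $1w1$ and $\Ww'$ — therefore $\Ww$ as well, having the same language — contains $0w0$; these factors have the same length $|w|+2$ but their numbers of occurrences of $1$ differ by $2$, contradicting the balance of $\Ww$. (When $w$ is empty this degenerates to an occurrence of $11$ in $\Ww$, already ruled out.) Thus the two types cannot coexist, i.e.\ the $x_k$ are all lower Christoffel words, or all upper ones.

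The only point that requires some care is the middle step: extracting from ``both types occur'' a non-trivial lower and a non-trivial upper term separated by nothing but single letters, together with a clean treatment of the symmetric case in which the upper one precedes the lower. Once that configuration is isolated, the contradiction is a one-line balance argument on $1w1$ versus $0w0$, which is exactly why the statement is an immediate consequence of the balance property.
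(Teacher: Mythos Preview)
Your argument is correct and matches the paper's approach: the paper simply states the proposition as ``an immediate consequence of the balance property'' and gives no further proof, and your write-up is precisely the natural fleshing-out of that remark --- extract a nontrivial lower term $x_i=0u1$ and a nontrivial upper term $x_j=1v0$ separated only by letters $w$, then observe that $1w1\in\Ff(\Ww)$ and $0w0\in\Ff(\Ww')=\Ff(\Ww)$ violate balance.

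One small simplification: the preliminary reduction to $\alpha<1/2$ and the separate observation that $11\notin\Ff(\Ww)$ are unnecessary. Your balance argument on the pair $1w1$, $0w0$ already covers the case $w=\eps$ (giving the factors $11$ and $00$, which differ by $2$ in their number of $1$'s), so the whole first paragraph can be dropped without affecting the proof.
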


In the remainder of this section, we shall use the above definitions and results to give a proof of a stronger (and probably known) version of the strong coincidence conjecture in the case of Sturmian words, namely Proposition~\ref{thm:stcoSt} below. We need the following lemma (a restatement of \cite[Lemma~5.2]{BuPuZa}):
\begin{lemma}[See~\citet{BuPuZa}]
\label{thm:proxSt}
Two distinct Sturmian words $\Ww,\Ww'$ are proximal if and only if they can be written as $\Ww=q\Cc$, $\Ww'=q'\Cc$ for some $q,q'$ with $|q|=|q'|$; that is, if and only if they contain the characteristic word at the same position.
\end{lemma}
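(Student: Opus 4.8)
The statement to prove is Lemma~\ref{thm:proxSt}, characterizing proximality of two distinct Sturmian words via a common occurrence of the characteristic word at the same position.

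\medskip

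\noindent\textbf{Proof plan.}
The plan is to prove the two implications separately, both relying on the balance property and on the fact that the prefixes of the characteristic word $\Cc$ are exactly the left special factors of the Sturmian words of that slope.

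First I would handle the easy direction: if $\Ww=q\Cc$ and $\Ww'=q'\Cc$ with $|q|=|q'|$, then for every $n$ the words $\Ww$ and $\Ww'$ share the factor (namely, the length-$n$ prefix of $\Cc$) starting at the common position $|q|$; since $n$ is arbitrary, $\Ww$ and $\Ww'$ agree on arbitrarily long segments, hence are proximal. (If $\Ww$ and $\Ww'$ were of different slopes they could not be proximal at all, so the hypothesis forces equal slopes; this is automatic here since both contain $\Cc$.)

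For the converse, suppose $\Ww\neq\Ww'$ are proximal. Then there exist factorizations $\Ww=p u\Ss$, $\Ww'=p'u\Ss'$ with $|p|=|p'|$ and $|u|$ arbitrarily large; fix such a decomposition and look at the first position $k=|p|$ where a long common factor $u$ begins. Since $\Ww\neq\Ww'$, we may choose $p,p'$ so that the letters immediately preceding $u$ in $\Ww$ and $\Ww'$ differ (otherwise extend $u$ to the left — this must terminate since the words are distinct, and it cannot run off the left end of both words unless one of $\Ww,\Ww'$ is a suffix of the other, a case that I would treat directly). Thus $u$ is preceded by $0$ in one word and by $1$ in the other, so every prefix of $u$ is left special, i.e.\ a prefix of $\Cc$. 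Letting $|u|\to\infty$ (using proximality to get common factors of every length starting at the \emph{same} position, after possibly passing to the leftmost branching point), I conclude that the common tail starting at position $k$ is exactly $\Cc$, giving $\Ww=q\Cc$ and $\Ww'=q'\Cc$ with $|q|=|q'|=k$.

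\medskip

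\noindent\textbf{Main obstacle.}
The delicate point is the bookkeeping on \emph{positions}: proximality gives arbitrarily long common factors, but a priori these could occur at different offsets for different lengths, and one must argue that there is a single position $k$ at which the two words branch and past which they agree forever. The balance property is the key tool here: two long common factors at different positions would, together with the left-special structure, force a balance violation between $\Ww$ and $\Ww'$. Pinning this down carefully — and correctly disposing of the degenerate case where one of the two words is a proper suffix of the other (which corresponds to a singular Sturmian word) — is where the real work lies; everything else is a direct consequence of the definitions. Since the statement is attributed to \cite[Lemma~5.2]{BuPuZa}, I would in fact just cite that argument rather than reproduce it in full.
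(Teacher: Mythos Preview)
The paper does not give its own proof of this lemma: it is stated as a restatement of \cite[Lemma~5.2]{BuPuZa} and simply cited. Your final remark---that you would just cite that reference---is therefore exactly what the paper does, and there is nothing further to compare.

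As for the sketch you offer: the easy direction is fine, and your outline for the converse is on the right track. The obstacle you single out is real. One clean way to dispose of it (rather than a direct balance argument on pairs of long common segments) is to pass through the mechanical-word description: if $\Ww=\Ss_{\alpha,\rho}$ and $\Ww'=\Ss_{\alpha,\rho'}$ are both nonsingular with $\rho\neq\rho'$, then the set of positions where they disagree corresponds to the set of $n$ with $\{n\alpha+\rho\}$ in a fixed subset of the circle of positive measure, hence has bounded gaps by equidistribution (or the three-gap theorem), contradicting proximality. This forces at least one word to be singular, and then the left-special argument you describe pins down the common position of $\Cc$. Either route ultimately reproduces the content of the cited lemma.
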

As observed above, in such a case the set $\{q,q'\}$ is either $\{0,1\}$ or $\{\tilde p01,\tilde p10\}$ for some $p\in\Pre(\Cc)$.

We recall that a morphism $f:A^{*}\to A^{*}$ is said to be \emph{Sturmian} if it maps Sturmian words to Sturmian words; as is well-known~(cf.~\cite{ch2acow}), for a morphism to be Sturmian it suffices to map \emph{one} Sturmian word to another one. Furthermore, Sturmian morphisms form a monoid generated by the three substitutions
\begin{equation}
\label{eq:stmor}
E:0\mapsto 1,1\mapsto 0,\quad \varphi:0\mapsto 01,1\mapsto 0,\quad\text{ and }\tilde\varphi:0\mapsto10,1\mapsto 0\;.
\end{equation}
From this, an easy induction argument gives the following property, also well-known:
\begin{lemma}
\label{thm:f0f1}
If $f\notin\{\mathrm{id},E\}$ is a Sturmian morphism, then one between $f(0)$ and $f(1)$ is  a proper prefix or a proper suffix of the other.
\end{lemma}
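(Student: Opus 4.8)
The plan is to argue by induction on the least number $n$ of generators needed to write $f$ as a product of the three substitutions in~\eqref{eq:stmor}. The case $n=0$ is vacuous, as then $f=\mathrm{id}$. For the inductive step, assume $f\notin\{\mathrm{id},E\}$ requires $n\geq 1$ generators; then we may factor $f=g\circ s$ with $s\in\{E,\varphi,\tilde\varphi\}$ and $g$ a Sturmian morphism requiring at most $n-1$ generators, so that $f(a)=g(s(a))$ for each letter $a$. I would also note at the outset that every Sturmian morphism is non-erasing — the generators are, hence so are all their products — so $g(0)$ and $g(1)$ are both nonempty; this is exactly what makes the prefix/suffix relations obtained below \emph{proper}.

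The cases $s=\varphi$ and $s=\tilde\varphi$ can be settled directly, without the inductive hypothesis. If $s=\varphi$ then $f(0)=g(01)=g(0)g(1)$ whereas $f(1)=g(0)$, so $f(1)$ is a prefix of $f(0)$, and a proper one since $g(1)\neq\eps$. Symmetrically, if $s=\tilde\varphi$ then $f(0)=g(10)=g(1)g(0)$ and $f(1)=g(0)$, so $f(1)$ is a proper suffix of $f(0)$.

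In the remaining case $s=E$ we have $f(0)=g(1)$ and $f(1)=g(0)$, hence $\{f(0),f(1)\}=\{g(0),g(1)\}$. The key observation is that $g\notin\{\mathrm{id},E\}$: otherwise $f=g\circ E$ would be $E$ or $\mathrm{id}$, contradicting our assumption. So the inductive hypothesis applies to $g$, yielding that one of $g(0),g(1)$ is a proper prefix or proper suffix of the other; since $\{f(0),f(1)\}=\{g(0),g(1)\}$, the same conclusion holds for $f$, and the induction closes.

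The one point needing care — and the only place the hypothesis $f\notin\{\mathrm{id},E\}$ is used — is excluding $g\in\{\mathrm{id},E\}$ in the $s=E$ step before invoking the inductive hypothesis; since $E$ is an involution this is immediate, so it is more a matter of bookkeeping than a genuine obstacle. Everything else reduces to the displayed computations on the images of the generators.
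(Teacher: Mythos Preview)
Your proof is correct and follows precisely the approach the paper indicates: an induction on the length of a decomposition of $f$ in terms of the generators $E,\varphi,\tilde\varphi$, with the direct cases $s\in\{\varphi,\tilde\varphi\}$ and the reduction $s=E$ handled just as one would expect. The paper itself does not spell out the argument beyond calling it ``an easy induction'', so your write-up is in fact more detailed than what appears there.
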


We also recall that a complete characterization of Sturmian fixed points of morphisms in terms of slope and intercept was given by~\citet{yasutomi} (see also~\cite{BeEiItRa}).

The following result about coincidence for Sturmian words 
can be easily proved as a direct consequence of well-known characterizations and properties of Sturmian morphisms. We give here a proof based on RC factorizations:
\begin{prop}
\label{thm:stcoSt}
Let $\Ww,\Ww'$ be distinct Sturmian words that are fixed points of a nontrivial morphism $f:A^{*}\to A^{*}$. Then $\{\Ww,\Ww'\}=\{01\Cc,10\Cc\}$, where $\Cc$ is the characteristic word having the same language.
\end{prop}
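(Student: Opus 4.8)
The plan is to combine the description of RC factorizations with a direct iteration of $f$; the crux will be that $\Ww$ and $\Ww'$ turn out to be \emph{proximal}, after which Lemma~\ref{thm:proxSt} pins them down. First, some reductions. Since $\Ww=f(\Ww)$ is aperiodic, $f$ is non‑erasing and $f\notin\{\mathrm{id},E\}$ (the morphism $E$ fixes no infinite word). A letter $c$ with $|f(c)|=1$ would satisfy $f(c)=c$ and give $f$ a unipotent incidence matrix, whose only nonnegative eigenvectors are $e_0,e_1$ — impossible for the irrational letter‑frequency vector of a Sturmian fixed point of $f$; hence $|f(0)|,|f(1)|\geq 2$ and $\Ww=\lim_n f^n(\Ww_0)$, $\Ww'=\lim_n f^n(\Ww'_0)$. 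As $\Ww\neq\Ww'$ this forces $\Ww_0\neq\Ww'_0$, so $\{\Ww_0,\Ww'_0\}=\{0,1\}$; say $\Ww$ begins with $0$ and $\Ww'$ with $1$. Then $f(0)$ begins with $0$ and $f(1)$ with $1$, so by Lemma~\ref{thm:f0f1} neither of $f(0),f(1)$ is a prefix of the other, hence one is a proper suffix of the other; replacing $f,\Ww,\Ww'$ by $EfE,E\Ww,E\Ww'$ if necessary — which preserves the shape of the statement — I may assume $f(1)=vf(0)$ with $v$ a nonempty word, necessarily of the form $v=1v'$.

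Next I set up the RC factorization. If $\{\Ww,\Ww'\}=\{0\Cc,1\Cc\}$, then $\Ww=0\Cc$, $\Ww'=1\Cc$, and from $f(0)f(\Cc)=0\Cc$ and $f(1)f(\Cc)=1\Cc$ we get $v\cdot 0\Cc=vf(0)f(\Cc)=f(1)f(\Cc)=1\Cc$, hence $v'0\Cc=\Cc$, making $\Cc$ periodic — impossible. So, by Proposition~\ref{thm:abcom}, $\Ww$ and $\Ww'$ are Abelian comparable; let $\Ww=x_1x_2\cdots$, $\Ww'=\wt{x_1}\wt{x_2}\cdots$ be the RC factorization of~\eqref{eq:RC}. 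Since $\Ww_0\neq\Ww'_0$, the term $x_1$ is not a single letter, so $x_1=0u1$ with $u$ central and $\wt{x_1}=1u0$; thus $\Ww=0u1\,x_2x_3\cdots$ and $\Ww'=1u0\,\wt{x_2}\wt{x_3}\cdots$.

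Now comes the heart of the proof: showing $\Ww$ and $\Ww'$ are proximal. Iterating $f(1)=vf(0)$ gives $f^n(1)=f^{n-1}(v)\,f^n(0)$ for every $n\geq 1$, so applying $f^n$ to the two factorizations above yields
\begin{gather*}
\Ww=f^n(0)\,f^n(u)\,f^{n-1}(v)\,f^n(0)\,f^n(x_2)\,f^n(x_3)\cdots,\\
\Ww'=f^{n-1}(v)\,f^n(0)\,f^n(u)\,f^n(0)\,f^n(\wt{x_2})\,f^n(\wt{x_3})\cdots .
\end{gather*}
In both words the occurrence of the block $f^n(0)$ appearing just before $f^n(x_2)$ (resp.\ $f^n(\wt{x_2})$) begins at the same position $|f^{n-1}(v)|+|f^n(0)|+|f^n(u)|$; hence $\Ww$ and $\Ww'$ coincide on an aligned factor of length $|f^n(0)|$. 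Since $\Ww=\lim_n f^n(0)$ is infinite, $|f^n(0)|\to\infty$, so $\Ww$ and $\Ww'$ are proximal.

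Finally, by Lemma~\ref{thm:proxSt} we may write $\Ww=q\Cc$, $\Ww'=q'\Cc$ with $|q|=|q'|$, and $\{q,q'\}$ equals $\{0,1\}$ or $\{\wt p01,\wt p10\}$ for some $p\in\Pre(\Cc)$. The first alternative is the pair just excluded. In the second, if $p\neq\eps$ then $\wt p01\Cc$ and $\wt p10\Cc$ share the first letter of $\wt p$, contradicting $\Ww_0\neq\Ww'_0$; hence $p=\eps$ and $\{\Ww,\Ww'\}=\{01\Cc,10\Cc\}$. I expect the main obstacle to be the proximality step, together with making the reductions of the first paragraph airtight — in particular the fact that a Sturmian word fixed by a nontrivial morphism is generated by iteration from its first letter; it is precisely the suffix relation $f(1)=vf(0)$ coming from Lemma~\ref{thm:f0f1} that makes the two $f^n$‑expansions ``rotations'' of one another and so produces the common aligned block.
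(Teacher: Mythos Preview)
Your argument and the paper's share the same skeleton—exclude $\{0\Cc,1\Cc\}$, pass to the RC factorization via Proposition~\ref{thm:abcom}, establish proximality, and finish with Lemma~\ref{thm:proxSt}—but the proximality step is done differently. The paper never normalizes to a suffix relation; instead it argues by contradiction that some block $x_i$ in the RC factorization has length~$1$: if every $x_n=0u_n1$, then since $f(x_1)$ and $f(\wt{x_1})$ are Abelian equivalent one has $f(x_1)=x_1\cdots x_m$, which forces $f(1)$ to begin and end in $1$ and, symmetrically, $f(0)$ to begin and end in $0$, contradicting Lemma~\ref{thm:f0f1}. A length-$1$ block then gives~\eqref{eq:pas}, hence proximality. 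Your route instead fixes the suffix relation $f(1)=vf(0)$ up front and exhibits the aligned block $f^n(0)$ explicitly; this is more constructive and makes the common segment visible, at the cost of the extra normalization. Both arguments ultimately hinge on Lemma~\ref{thm:f0f1}.

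One step in your reductions is not correct as written. The implication ``$|f(c)|=1\Rightarrow f(c)=c$'' fails for Sturmian morphisms in general (e.g.\ $E\varphi E:0\mapsto 1,\,1\mapsto 10$ has a Sturmian fixed point), and even when $f(c)=c$ the incidence matrix is only triangular, not unipotent. What does hold, using the hypothesis of \emph{two} distinct fixed points, is: if $f(c)=c'\neq c$ then every fixed point must begin with $c'$, and since then $|f(c')|\geq 2$ (else $f\in\{\mathrm{id},E\}$) both $\Ww$ and $\Ww'$ equal $\lim_nf^n(c')$, a contradiction; and if $f(c)=c$ the incidence matrix is integer-triangular with a rational Perron direction, incompatible with the irrational letter frequencies. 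With this patch your reductions go through and the remainder of the proof is sound.
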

\begin{proof}
It is easy to see that $\Ww$ and $\Ww'$ have the same language. Without loss of generality, we can assume that $\Ww$ starts with $0$ and $\Ww'$ starts with $1$. Let us first show that the combination $\Ww=0\Cc,\,\Ww'=1\Cc$ is not possible. If $f(0\Cc)=0\Cc$ and $f(1\Cc)=1\Cc$, it would follow $\Cc=\lambda f(\Cc)$ for some word $\lambda$, and then $|f(0)|=|f(1)|$, since $\Cc$ is not ultimately periodic. This is impossible in view of Lemma~\ref{thm:f0f1}.

Thus by Proposition~\ref{thm:abcom}, $\Ww$ and $\Ww'$ are Abelian comparable; let $\Ww=x_{1}x_{2}\cdots$ and $\Ww'=\wt{x_{1}}\wt{x_{2}}\cdots$ be their corresponding RC factorizations. By contradiction, suppose they contain only words of length $>1$. By Proposition~\ref{thm:uplow}, for all $n\geq 1$ we can write $x_{n}=0u_{n}1$ for suitable words $u_{n}$. Since $f(x_{1})$ is Abelian equivalent to $f(\wt{x_{1}})$, there must be an $m>1$ such that $f(x_{1})=x_{1}x_{2}\cdots x_{m}$. Hence $x_{1}$ and its image both end with $1$, so that the word $f(1)$, which clearly starts with $1$, ends in $1$ as well. As $f(\wt{x_{1}})=\wt{x_{1}}\cdots\wt{x_{m}}$, by the same argument it follows that $f(0)$ begins and ends in $0$. Again, this contradicts Lemma~\ref{thm:f0f1}.

Therefore, there must be some term $x_{i}$ of length 1 in the RC factorization, that is,~\eqref{eq:pas} holds.
As discussed above, this implies that $\Ww$ and $\Ww'$ are proximal. Since $\{\Ww,\Ww'\}\neq\{0\Cc,1\Cc\}$, by Lemma~\ref{thm:proxSt} it follows $\{\Ww,\Ww'\}=\{\tilde p01\Cc,\tilde p10\Cc\}$ for some $p\in\Pre(\Cc)$; as $\Ww$ starts with $0$ and $\Ww'$ with 1, the assertion is proved.
\end{proof}
\begin{example}
The words $01\mathbf f$ and $10\mathbf f$ are both fixed by $\tilde\varphi^{2}$, as defined in~\eqref{eq:stmor}.
\end{example}

\section{Main results}
\subsection{Terms of RC factorizations}
We recall the following well-known result by~\citet{slater}, deeply related to the \emph{three distance theorem} proved by~\citet{sosd} (see also~\cite{ab3d}):
\begin{thm}[Three gap theorem]
\label{thm:3gap}
Let $\alpha$ be an irrational number with $0<\alpha<1$, and let $0<\beta<1/2$. The gaps between the successive integers $n$ such that $\{n\alpha\} <\beta$  take either two or three values, one being the sum of the other two.
\end{thm}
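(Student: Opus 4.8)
The plan is to translate the statement into the language of irrational rotations. Put $\mathbb{T}=\mathbb{R}/\mathbb{Z}$, let $R\colon\mathbb{T}\to\mathbb{T}$ be $R(x)=x+\alpha$, and set $I=[0,\beta)$. The integers $n\geq 0$ with $\{n\alpha\}<\beta$ are exactly the visit times of the orbit $\bigl(R^{n}(0)\bigr)_{n\geq 0}$ to the arc $I$; writing them as $0=n_{0}<n_{1}<n_{2}<\cdots$, the gap $n_{i+1}-n_{i}$ equals the first-return time $\rho(x_{i}):=\min\{k\geq 1:R^{k}(x_{i})\in I\}$ of the point $x_{i}:=R^{n_{i}}(0)\in I$. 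Since $\alpha$ is irrational, the orbit of $0$ is dense in $I$, so the set of gaps is contained in the set of values taken by the first-return function $\rho\colon I\to\mathbb{N}$ (and, once each value is known to be attained on a subinterval, equal to it). The theorem thus reduces to showing that $\rho$ takes at least two and at most three values on $I$, the largest being the sum of the other two whenever there are three. That $\rho$ is not constant is immediate, since otherwise the first-return map would be a genuine rotation of the arc $I$, forcing $\{k\alpha\}=0$; note that $\beta<1/2$ plays no role in this argument (it is relevant only to the way the theorem is applied below).

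For the upper bound of three, view $R$ as a rotation of the circle, hence a homeomorphism: the first-return (induced) map $T(x)=R^{\rho(x)}(x)$ on $I$ can then fail to be locally constant in $\rho$ only at a point $x$ whose orbit segment $x,R(x),\dots,R^{\rho(x)}(x)$ touches the boundary $\{0,\beta\}$ of $I$. During such a segment the orbit lies in the complementary arc except at its endpoints, so it can meet $0$ (interior to $I$) only at the final step, and it meets $\beta$ at most once. A short check shows there is at most one $x\in I$ of the first type — the first backward iterate $R^{-m}(0)$ that lands in $I$ — and at most one of the second type — the first backward iterate $R^{-j}(\beta)$ that lands in $I$. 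These two points cut $I$ into at most three subintervals on which $\rho$ is constant; so $\rho$ takes at most three values, and $T$ is an exchange of at most three intervals.

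It remains to identify those values when there are exactly three. With $m$ and $j$ as above, the plan is to follow the three orbit towers standing over the three subintervals of $I$ and to verify, from the cyclic position of the four points $0$, $\beta$, $R^{-m}(0)$ and $R^{-j}(\beta)$ on $\mathbb{T}$, that the tower heights (the return times) are $m$, $j$ and $m+j$, the longest excursion being exactly the concatenation of the two shorter ones. I expect this last step to be the only real difficulty: the boundary count above is routine, whereas determining which subinterval carries which return time — and hence why the three values satisfy the additive relation — requires a careful case analysis according to the cyclic order of those four boundary points. Alternatively, the statement may simply be quoted from~\citet{slater}, or deduced from the classical three-distance theorem of~\citet{sosd} applied to a suitable finite point set.
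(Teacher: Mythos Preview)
The paper does not prove Theorem~\ref{thm:3gap}: it is stated as a known result due to \citet{slater} (and linked to the three-distance theorem of \citet{sosd}), then invoked as a black box inside the proof of Theorem~\ref{thm:main}. Your closing sentence already anticipates this possibility, and that is precisely what happens here, so there is no argument in the paper to compare your sketch against.

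For what it is worth, the outline you give --- recasting the gaps as first-return times of the rotation $R(x)=x+\alpha$ to the arc $I=[0,\beta)$, observing that the return-time function $\rho$ can jump only at the first backward hits $R^{-m}(0)$ and $R^{-j}(\beta)$ lying in $I$, and then reading off the tower heights --- is the standard dynamical proof. The part you single out as the ``only real difficulty'' (checking that the three heights are $m$, $j$ and $m+j$) is indeed where the bookkeeping lives; your non-constancy argument for $\rho$ is also fine once one notes that a constant return time $k$ would make $R^{k}$ carry the arc $[0,\beta)$ onto itself, forcing $\{k\alpha\}=0$.
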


Our main theorem shows that RC factorizations have at most 3 distinct terms.
\begin{thm}
Let $\Ww $ be a Sturmian word, and $\Ww=x_{1}x_{2}\cdots x_{n}\cdots$ be an RC factorization of $\Ww $. The cardinality of the set $X=\{x_{n}\mid n>0\}$ is either 2 or 3, and in the latter case, the longest element of $X$ is obtained by concatenating the other two.
\label{thm:main}
\end{thm}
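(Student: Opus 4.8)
The plan is to let the three gap theorem govern the \emph{lengths} of the terms $x_{n}$, and then to upgrade this to control of the terms themselves, using the rigidity of Christoffel factors of a fixed Sturmian word.

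First I would set up coordinates. By Proposition~\ref{thm:uplow} we may assume every $x_{n}$ is a \emph{lower} Christoffel word (the upper case follows by applying $E$). Put $\Ww'=\wt{x_{1}}\wt{x_{2}}\cdots$; as $\Ww$ and $\Ww'$ have the same slope $\alpha$, after possibly interchanging them (which merely replaces $(x_{n})$ by $(\wt{x_{n}})$) and a harmless adjustment for singular words we may write $\Ww=\Ss_{\alpha,\rho}$, $\Ww'=\Ss_{\alpha,\rho'}$ with $\rho\le\rho'$. Counting occurrences of the letter $1$ in prefixes, the length-$n$ prefixes of $\Ww$ and $\Ww'$ are Abelian equivalent exactly when $\lfloor n\alpha+\rho\rfloor=\lfloor n\alpha+\rho'\rfloor$, i.e.\ when $\{n\alpha+\rho\}<\beta$, where $\beta:=1-(\rho'-\rho)\in(0,1)$. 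Hence, writing $0=n_{0}<n_{1}<n_{2}<\cdots$ for the positions where a prefix of $\Ww$ is Abelian equivalent to one of $\Ww'$ (so that $x_{k}$ occupies positions $n_{k-1},\dots,n_{k}-1$ of $\Ww$), the set $\{n_{k}\mid k\ge 0\}$ equals $\{n\in\Nn\mid\{n\alpha+\rho\}<\beta\}$, and the lengths $|x_{k}|$ are precisely its successive gaps. By the three gap theorem — Theorem~\ref{thm:3gap} when $\beta<1/2$, and its general form, valid for an arbitrary arc, otherwise — these gaps take two or three distinct values, and in the three-value case the largest equals the sum of the other two.

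Next I would show that the length of a term determines the term. If $|x_{k}|=\ell\ge 2$, then $x_{k}$ is a lower Christoffel word, so $x_{k}=0u1$ with $u$ central, and $u\in\Ff(\Ww)$ because $x_{k}\in\Ff(\Ww)$; moreover $\wt{x_{k}}=1u0$ is a factor of $\Ww'$ and hence — same language — of $\Ww$, so $0u$ and $1u$ both lie in $\Ff(\Ww)$. Thus $u$ is the unique left special factor of $\Ww$ of length $\ell-2$, namely the prefix of $\Cc$ of that length, so $u$, and therefore $x_{k}=0u1$, is determined by $\ell$. For $\ell=1$ this is clear unless both $0$ and $1$ occur among the terms; but a length-$1$ term equal to $0$ forces $\beta>\alpha$ and one equal to $1$ forces $\beta>1-\alpha$, so in that subcase $\beta>\max(\alpha,1-\alpha)$, which precludes two consecutive non-cuts, hence makes every gap $\le 2$ and $X\subseteq\{0,1,01\}$ with $01=0\cdot 1$, in agreement with the statement; in every other case the length-$1$ terms agree. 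Consequently $\ell\mapsto X_{\ell}$, the common value of the length-$\ell$ terms, is a well-defined injection on the attained lengths, so $\#X$ is the number of attained lengths: $2$ or $3$, and at least $2$ since $\Ww$ is aperiodic.

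It remains, when $\#X=3$ with attained lengths $a<b<c=a+b$, to identify $X_{c}$ with $X_{a}X_{b}$ or $X_{b}X_{a}$. Here $X_{a},X_{b},X_{c}$ are lower Christoffel factors of $\Ww$, hence lie on a single infinite path of the Christoffel tree; and $X_{c}$, being a nontrivial Christoffel word, factors uniquely as $X_{c}=PQ$ with $P,Q$ shorter lower Christoffel words, which — being factors of $X_{c}$ — are factors of $\Ww$, with $|P|+|Q|=c$. What must be shown is $\{|P|,|Q|\}=\{a,b\}$, for then $\{P,Q\}=\{X_{a},X_{b}\}$ by the injectivity above. I expect this to be the main obstacle: the bound $\#X\le 3$ with its additive relation is immediate from the three gap theorem, and ``equal length $\Rightarrow$ equal term'' is soft, but recognising the longest term \emph{literally} as a concatenation requires the renormalisation — equivalently, the Christoffel-tree — structure behind the theorem. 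Concretely, inducing the rotation on a suitably chosen smaller arc deletes the longest return time $c$, splitting each length-$c$ return into a length-$a$ return immediately followed by a length-$b$ one (or the reverse); the resulting finer factorization of $\Ww$ into lower Christoffel words replaces every $X_{c}$ by $X_{a}X_{b}$ (resp.\ $X_{b}X_{a}$) and leaves the other terms unchanged, and comparing with the original factorization yields the claim. A secondary technicality, to be dispatched separately, is the treatment of singular Sturmian words and of the case $\beta\ge 1/2$ in the set-up.
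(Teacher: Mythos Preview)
Your framework is essentially the paper's: the cut positions $n_{k}$ are the $n$ with $\{n\alpha+\rho\}$ in a fixed arc, the three gap theorem bounds the set of lengths $|x_{k}|$, and a nontrivial Christoffel factor of $\Ww$ of length $\ell$ equals $0u1$ (resp.\ $1u0$) with $u$ the unique length-$(\ell-2)$ prefix of $\Cc$, so that length determines the term. The paper arranges the normalisations differently---it sets $\beta=\{\rho'-\rho\}\le 1/2$ and splits into $\alpha\le\beta$ (handled by balance alone, yielding $X\subseteq\{0,0^{h}1,0^{h+1}1\}$ directly) and $\alpha>\beta$ (where Theorem~\ref{thm:3gap} applies as stated, so no appeal to a ``general form'' is needed)---but this is cosmetic; your unified version and your length-$1$ subcase are fine.

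The substantive gap is the concatenation step, precisely where you flag the obstacle. From $X_{c}=PQ$ in the Christoffel tree you only get $|P|+|Q|=a+b$, not $\{|P|,|Q|\}=\{a,b\}$, and your renormalisation sketch (``induce on a larger arc so that each $c$-gap splits into an $a$-gap followed by a $b$-gap, giving a finer RC factorization with term set $\{X_{a},X_{b}\}$'') invokes a refinement of Slater's theorem that Theorem~\ref{thm:3gap} does not supply; making it rigorous requires either that finer statement or an independent proof that some enlarged arc yields exactly the two gap lengths $a$ and $b$. The paper bypasses this entirely with a short combinatorial argument on central words: writing (in the upper case) $X_{c}=1v0$, $X_{a}=1u_{1}0$, $X_{b}=1u_{2}0$, the words $u_{1},u_{2},v$ are palindromic prefixes of $\Cc$ with $|v|=|u_{1}|+|u_{2}|+2$, hence $v=u_{1}ab\,u_{2}=u_{2}ba\,u_{1}$ for some letters $a,b$. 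If $a\neq b$ then $X_{c}\in\{X_{a}X_{b},X_{b}X_{a}\}$ and we are done; the case $a=b$ is ruled out by Corollary~\ref{thm:cent} (one of the $u_{j}$ contains both letters, so $u_{j}01$ or $u_{j}10$ is a prefix of $v$, contradicting $v=u_{j}aa\,u_{3-j}$). This two-line argument is what your proposal is missing. The singular case is then handled, as you anticipate, by passing to nonsingular Sturmian words sharing a sufficiently long prefix with $\Ww$ and $\Ww'$.
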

\begin{proof} 
Since Sturmian words are not periodic, the set $X$ has cardinality at least two. 
Let $\Ww'=\wt{x_{1}}\wt{x_{2}}\cdots$, and suppose first that $\Ww $ and $\Ww'$ are both lower mechanical words, so that $\Ww=\Ss_{\alpha,\rho}$ and
$\Ww'=\Ss_{\alpha,\rho'}$ for some $\rho,\rho'\in [0,1[$. Without loss of generality, we may suppose $\alpha<1/2$ (otherwise it suffices to exchange the roles of the letters 0 and 1), and $\beta:=\{\rho'-\rho\}\leq 1/2$ (otherwise we swap $\Ww $ and $\Ww'$). Hence $\alpha<1-\alpha$ and $\beta\leq 1-\beta$.

We distinguish two possibilities:
\begin{enumerate}[C{a}se 1.]
\item If $\alpha\leq\beta$, then $\Ww $ and $\Ww'$ cannot be 1 at the same time, i.e., there is no $n$ for which $\{n\alpha+\rho\}$ and $\{n\alpha+\rho'\}$ are both larger than $1-\alpha$. Assuming, in view of Proposition~\ref{thm:uplow}, that all terms $x_{k}$ ($k\geq 1$) are lower Christoffel words (the ``upper'' case being similar), this implies $X\subseteq\{0\}\cup\{0^{k}1\mid k>0\}$, since any other lower Christoffel word $x$ would have a 1 in the same position as in $\tilde x$. Let $i,j,k_{1}$, and $k_{2}$ be positive integers such that $i<j$, 
$x_{i}=0^{k_{1}}1$, and $x_{j}=0^{k_{2}}1$. This implies, since $\Ww $ and $\Ww'$ have the same language, that $\Ww $ has the two factors
\begin{eqnarray*}
x_{i}x_{i+1}\cdots x_{j-1}x_{j}&=&0^{k_{1}}1x_{i+1}\cdots x_{j-1}0^{k_{2}}1\,,
\;\text{ and}\\
\wt{x_{i}}\wt{x_{i+1}}\cdots \wt{x_{j-1}}\wt{x_{j}}&=&10^{k_{1}}\wt{x_{i+1}}\cdots \wt{x_{j-1}}10^{k_{2}}
\end{eqnarray*}
so that $|k_{1}-k_{2}|\leq 1$ as a consequence of the balance property. By the arbitrary choice of $i$ and $j$, it follows $X\subseteq\{0,0^{h}1,0^{h+1}1\}$ for some $h>0$, which settles this case.

\item If $\alpha>\beta$, $\Ww $ and $\Ww'$ differ exactly in all positions $n$ such that $\{n\alpha+\rho\}\in I_{1}\cup I_{2}$, with 
$I_{1}=[1-\alpha-\beta,1-\alpha[$ and $I_{2}=[1-\beta,1[$. Note that if 
$\{n\alpha+\rho\}\in I_{1}$, then $\{(n+1)\alpha+\rho\}\in I_{2}$. Hence, if
$\rho\notin I_{2}$ we derive $X\subseteq\{0,1,01\}$.

If $\rho\in I_{2}$, let 
$(n_{i})_{i\geq 0}$ be the increasing sequence of all positive integers such that $\{n_{i}\alpha+\rho\}\in I_{2}$. For all $i\geq 0$ we have $\{n_{i}\alpha+\rho+\beta\}<\beta$, so that by Theorem~\ref{thm:3gap} it follows that the set
$\{n_{i+1}-n_{i}\mid i\geq 0\}$ is contained in $\{k_{1},k_{2},k_{1}+k_{2}\}$ for some distinct integers $k_{1},k_{2}$ (both greater than 1, as $\alpha<1-\beta$). For all $i\geq 0$, $x_{i}$ is the factor of $\Ww$ starting with the $n_{i}$-th letter and ending in the $(n_{i+1}-1)$-th one, since it can be written as $1u0$ for some $u$ such that the corresponding factor of $\Ww'$ is $0u1$. Thus, the elements of $X$ may have length $k_{1}$, $k_{2}$, or $k_{1}+k_{2}$. There cannot be two of the same length, since 
any factor of $\Ww$ which is an upper Christoffel word of length~$\geq 2$ can be written as $1u0$ where $u$ is a palindromic prefix of the characteristic word of slope $\alpha$.
Hence $X$ has cardinality 2 or 3.

Let now $X=\{y_{1},y_{2},z\}$ with $|z|=k_{1}+k_{2}$ and $|y_{i}|=k_{i}$ for $i=1,2$; we can write $z=1v0$ and $y_{i}=1u_{i}0$. Since $u_{1}$, $u_{2}$, and $v$ are all palindromic prefixes of the characteristic word, $u_i$ is a prefix and a suffix of $v$ for $i=1,2$. Hence
\begin{equation}
\label{eq:cent}
v=u_{1}abu_{2}=u_{2}bau_{1}
\end{equation} for some letters $a,b\in\{0,1\}$. If $a\neq b$, it follows either $z=y_{1}y_{2}$ or $z=y_{2}y_{1}$ and we are done.

By contradiction, let us then suppose $a=b$. From $\alpha<1/2$ we derive that $a=b=0$, and that both $u_{1}$ and $u_{2}$ start with the letter 0. If both were a power of 0 we would reach a contradiction, as no point of the (dense) sequence $\{n\alpha+\rho\}$ would lie in the nonempty interval 
$[1-\alpha,1-\beta[$. Hence there exists $j\in\{1,2\}$ such that $u_{j}$ contains both 0 and 1, so that by Corollary~\ref{thm:cent} either $u_{j}01$ or $u_{j}10$ is a prefix of $v$. This is a contradiction, since we are assuming~\eqref{eq:cent} with $a=b=0$.
\end{enumerate}
When $\Ww $ and $\Ww'$ are both upper mechanical words, the proof is symmetrical. 

Let us now suppose that both $\Ww$ and $\Ww'$ are singular, and that only one of them is a lower mechanical word. By contradiction, suppose that
$X$ has more than 3 elements. This means that there exists $j>3$ such that the set $\{x_{1},x_{2},\ldots,x_{j}\}\subseteq X$ has cardinality 4. The word 
$r=x_{1}\cdots x_{j}$ is prefix of infinitely many nonsingular Sturmian words of slope $\alpha$, and so is $r':=\wt{x_{1}}\cdots\wt{x_{j}}$. Let $\mathbf t,\mathbf t'$ be two such nonsingular extensions, of $r$ and $r'$ respectively. The RC factorization of $\mathbf t$ and $\mathbf t'$ has more than 3 distinct terms, but $\mathbf t$ and $\mathbf t'$ are both lower mechanical words, a contradiction because of what we proved above. The same argument by contradiction proves that if $X$ has cardinality 3, then its longest element has to be a concatenation of the other two.
\end{proof}

\subsection{RC factorizations as Sturmian or 3-iet words}
Theorem~\ref{thm:main} allows to consider RC factorizations as infinite words on the \emph{finite} alphabet $X$. In this section we analyze the structure of such words.

We recall that two finite words $u,v$ are \emph{conjugate} if $u=\lambda\mu$ and $v=\mu\lambda$ for some words $\lambda,\mu$. The following characterization of Sturmian morphisms was proved in \cite[Theorem~A.1]{berthedlr}:
\begin{thm}[See~\citet{berthedlr}]
A morphism $f:A^{*}\to A^{*}$ is Sturmian if and only if it maps the three Christoffel words $01$, $001$, and $011$ to conjugates of Christoffel words.
\end{thm}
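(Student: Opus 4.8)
The plan is to prove the two implications separately, using throughout a standard reformulation and a simple invariant. The reformulation is that a finite word $z$ is a conjugate of a Christoffel word if and only if $z$ is \emph{primitive} and the periodic word $z^{\omega}$ is balanced; equivalently (in the non‑degenerate case) $z$ is a cyclic rotation of the Christoffel word of slope $|z|_{1}/|z|$. The invariant is $\gcd(|w|_{0},|w|_{1})$: it is unchanged by each of the three generators $E$, $\varphi$, $\tilde\varphi$ of~\eqref{eq:stmor}, as one sees immediately from the way these morphisms act on letter counts, hence it is preserved by every Sturmian morphism; since a word with coprime letter counts is automatically primitive, Sturmian morphisms send words with coprime letter counts to primitive words. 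Christoffel words and their conjugates have coprime letter counts (the Christoffel word of slope $p/q$ has $p$ occurrences of $1$ and $q-p$ of $0$).

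For the ``only if'' direction I would prove the stronger assertion that a Sturmian morphism $f$ maps \emph{every} Christoffel word $w$ to a conjugate of a Christoffel word. The word $w^{\omega}$ is balanced and periodic, hence a limit of Sturmian words; since $f$ is continuous on $A^{\Nn}$ and carries Sturmian words to Sturmian — hence balanced — words, $f(w)^{\omega}=f(w^{\omega})$ is a limit of balanced words, so it too is balanced, and it is obviously periodic. By the invariant above $f(w)$ is primitive, so $f(w)^{\omega}$ has primitive period $|f(w)|$, and therefore $f(w)$ is a conjugate of a Christoffel word. Taking $w\in\{01,001,011\}$ gives the claim.

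The ``if'' direction is the substantial one. Assume $f(01)$, $f(001)$, $f(011)$ are conjugates of Christoffel words and write $a=f(0)$, $b=f(1)$. A few degenerate possibilities are ruled out at once: if $a=\eps$ then $f(011)=bb$ would be an imprimitive word equal to a conjugate of a Christoffel word, forcing $b=\eps$ and then $f(01)=\eps$, which is absurd; symmetrically $b\neq\eps$; and if $a$ and $b$ commuted then $f(01)=ab$ would be a proper power, again absurd. Hence $f$ is non‑erasing, $\{a,b\}$ is a code and $a\neq b$, so $f$ is injective. Next, recalling that every Christoffel word of length $\geq 2$ is the product $uv$ of a unique node $(u,v)$ of the Christoffel tree, I would prove by induction on $|uv|$ that, for every such node, the three words $f(u)$, $f(v)$, $f(uv)$ are conjugates of Christoffel words. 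The base cases concern the nodes $(0,1)$, $(0,01)$, $(01,1)$ and follow from the hypothesis once one knows — see the last paragraph — that $a$ and $b$ are themselves conjugates of Christoffel words. The inductive step passes from a node $(u,v)$ to its children $(u,uv)$ and $(uv,v)$ via the conjugacy analogue of Proposition~\ref{thm:cmix}(1): if $x$, $t$, $xt$ are conjugates of Christoffel words then so are $x^{2}t$ and $xt^{2}$; one applies this with $x=f(u)$, $t=f(v)$, noting $xt=f(uv)$, $x^{2}t=f(u\cdot uv)$, $xt^{2}=f(uv\cdot v)$. Granting the claim, fix an irrational slope $\alpha$, let $\Cc=\Ss_{\alpha,\alpha}$ be the characteristic word of slope $\alpha$, let $p_{0}\prec p_{1}\prec\cdots$ be its central (palindromic) prefixes and set $w_{n}=0p_{n}1$. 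Each $w_{n}$ is a Christoffel word, so $f(w_{n})$ is a conjugate of a Christoffel word; moreover $|f(w_{n})|\to\infty$ since $f$ is non‑erasing, and $w_{n}$ agrees with $\Ss_{\alpha,0}=0\Cc$ on its first $|p_{n}|+1$ letters, so $f(w_{n})\to f(0\Cc)$. Every factor of $f(0\Cc)$, being contained in a sufficiently long prefix of $f(0\Cc)$, occurs in some $f(w_{n})$, hence inside a periodic balanced word; it follows that $f(0\Cc)$ is balanced. It is also aperiodic, since $0\Cc$ is aperiodic and an injective non‑erasing morphism preserves aperiodicity. Hence $f(0\Cc)$ is Sturmian, so $f$ maps a Sturmian word to a Sturmian word and is therefore a Sturmian morphism.

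I expect the main obstacle to be the combinatorial core of the ``if'' direction: showing that if $x,t,xt$ are conjugates of Christoffel words then so are $x^{2}t$ and $xt^{2}$, and deducing from the three hypotheses alone that $a$ and $b$ are conjugates of Christoffel words. Both rest on the rigidity of the factorizations of a conjugate of a Christoffel word as a product of two conjugates of Christoffel words: the plan is to show that any such factorization is a cyclic rotation of the standard factorization of a genuine Christoffel word, after which Proposition~\ref{thm:cmix}(1) — together with Corollary~\ref{thm:cent}, much as in the proof of Theorem~\ref{thm:main} — yields both statements. The ``only if'' direction and the limiting argument in the ``if'' direction are routine by comparison.
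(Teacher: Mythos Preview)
The paper does not prove this theorem at all; it is quoted from \cite[Theorem~A.1]{berthedlr} and used only as a black box to obtain Corollary~\ref{thm:bdlr}. So there is no proof here to compare your proposal against.

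That said, your plan for the ``if'' direction has a genuine gap. The ``conjugacy analogue of Proposition~\ref{thm:cmix}(1)'' on which your induction over the Christoffel tree rests --- namely, that if $x$, $t$, $xt$ are conjugates of Christoffel words then so are $x^{2}t$ and $xt^{2}$ --- is false. Take $x=0010$ (a conjugate of the Christoffel word $0001$) and $t=1$; then $xt=00101$ is the lower Christoffel word of slope $2/5$, yet $x^{2}t=001000101$ contains both $000$ and $101$ among its length-$3$ factors and is therefore unbalanced, hence not a conjugate of any Christoffel word. The same example also refutes the rigidity lemma you intend to use: the factorization $00101=0010\cdot 1$ into conjugates of Christoffel words has pieces of lengths $(4,1)$, so it is certainly not a cyclic rotation of the standard factorization $00101=001\cdot 01$ with pieces of lengths $(3,2)$.

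Thus the inductive step cannot go through with the hypothesis you carry. At minimum you would need a stronger invariant --- for instance, that $(f(u),f(v))$ is obtained from an actual Christoffel pair by a \emph{common} conjugation, not merely that each of $f(u)$, $f(v)$, $f(uv)$ is individually a conjugate of a Christoffel word --- and establishing such an invariant at the root from the three hypotheses on $f(01)$, $f(001)$, $f(011)$ is exactly the crux you have not addressed. Your ``only if'' direction, by contrast, is essentially fine.
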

\begin{cor}
\label{thm:bdlr}
If $u$, $v$, and $uv$ are Christoffel words, then the morphisms $0\mapsto u,1\mapsto v$ and $0\mapsto v, 1\mapsto u$ are Sturmian.
\end{cor}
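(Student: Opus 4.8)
The plan is to obtain this as an immediate consequence of the characterization of Sturmian morphisms stated just above, due to \citet{berthedlr}: a morphism is Sturmian as soon as it sends each of $01$, $001$, and $011$ to a conjugate of a Christoffel word. So the whole proof reduces to computing the images of these three words under our two morphisms.

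Consider first $g\colon 0\mapsto u,\ 1\mapsto v$. Then $g(01)=uv$, $g(001)=u^{2}v$, and $g(011)=uv^{2}$. By hypothesis $uv$ is Christoffel, and in particular $uv$ is a factor of some Sturmian word $\Ww$ (every Christoffel word occurs in some Sturmian word, by the very definition of central words); moreover $u\neq v$, since a Christoffel word is primitive while $u^{2}$ is not. Hence Proposition~\ref{thm:cmix}(1) applies to the triple $u,v,uv$ and tells us that $u^{2}v$ and $uv^{2}$ are Christoffel as well. Thus $g$ maps $01$, $001$, $011$ to Christoffel words, a fortiori to conjugates of Christoffel words, and the theorem above gives that $g$ is Sturmian.

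For $h\colon 0\mapsto v,\ 1\mapsto u$ the same argument does not go through directly, because $vu$ need not be Christoffel even though $uv$ is. Instead I would note that $h=g\circ E$, with $E$ the letter-exchange substitution of~\eqref{eq:stmor}: indeed $g(E(0))=g(1)=v$ and $g(E(1))=g(0)=u$. Since $E$ is Sturmian and $g$ has just been shown to be Sturmian, and Sturmian morphisms form a monoid (hence are closed under composition), $h$ is Sturmian too.

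The proof has essentially no hard part: the only points requiring a little care are the two small hypotheses needed to invoke Proposition~\ref{thm:cmix} (existence of a Sturmian word having $uv$ as a factor, and $u\neq v$), both of which are automatic, and the observation that the ``reversed'' morphism must be handled by composition with $E$ rather than by a fresh application of the theorem.
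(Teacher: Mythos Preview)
Your proof is correct and, for the morphism $g:0\mapsto u,\ 1\mapsto v$, coincides with the paper's one-line derivation from Proposition~\ref{thm:cmix} and the Berth\'e--de~Luca--Reutenauer theorem.

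For the second morphism $h:0\mapsto v,\ 1\mapsto u$, your detour through $h=g\circ E$ is valid but unnecessary. You write that ``the same argument does not go through directly, because $vu$ need not be Christoffel even though $uv$ is,'' but the theorem you are invoking only requires the images of $01$, $001$, $011$ to be \emph{conjugates} of Christoffel words, not Christoffel words themselves. Now $h(01)=vu$ is a conjugate of $uv$, $h(001)=v^{2}u$ is a conjugate of $uv^{2}$, and $h(011)=vu^{2}$ is a conjugate of $u^{2}v$; all three targets are Christoffel by hypothesis and Proposition~\ref{thm:cmix}. Thus the same direct application of the theorem handles $h$ as well, which is presumably what the paper's terse ``immediate consequence'' intends. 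Your composition-with-$E$ argument works, but it obscures the symmetry and relies on the additional (though well-known) fact that Sturmian morphisms form a monoid.
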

\begin{proof}
Immediate consequence of Proposition~\ref{thm:cmix} and the previous theorem.
\end{proof}
Let us recall a further result (\cite[Proposition~2.3.2]{ch2acow}) on the Sturmian morphisms $\varphi$ and $\tilde\varphi$ from~\eqref{eq:stmor}.
\begin{prop}[See \citet{ch2acow}]
\label{thm:phitil}
Let $\Ww$ be an infinite word.
	\begin{enumerate}
	\item If $\varphi(\Ww)$ is Sturmian, then so is $\Ww$.
	\item If $\tilde\varphi(\Ww)$ is Sturmian and $\Ww$ starts with $0$, then $\Ww$
	is Sturmian.
	\end{enumerate}
\end{prop}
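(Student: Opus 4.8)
The plan is to use the following reformulation of the definition of Sturmian: a binary infinite word is Sturmian if and only if it has exactly one right-special factor of each length (equivalently, its factor complexity satisfies $p(n+1)-p(n)=1$ for all $n$, hence $p(n)=n+1$). So for each part I would show that $\Ww$ has (i) at least one, and (ii) at most one right-special factor of every length. Point (i) is just aperiodicity: if $\Ww$ had no right-special factor of some length $n$, it would have none of any larger length (a suffix of a right-special word is right-special), hence be ultimately periodic, and so would $\varphi(\Ww)$ (resp.\ $\tilde\varphi(\Ww)$), contradicting that it is Sturmian. Point (ii) is the substance. Two elementary facts will be used: $\varphi$ and $\tilde\varphi$ are injective, their image sets $\{01,0\}$ and $\{10,0\}$ being codes; and the right-special factors of any Sturmian word form a chain for the suffix order --- the one of length $m$ being the length-$m$ suffix of the one of length $m+1$.

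For part~(1) I would argue by contradiction: suppose $\Ww$ has two distinct right-special factors $w_1\neq w_2$ of the same length. From $w_i0,\,w_i1\in\Ff(\Ww)$ one checks that $\varphi(w_i)\,0$ is right-special in $\mathbf y:=\varphi(\Ww)$: it is followed by $1$ because $\varphi(w_i0)=\varphi(w_i)\,01\in\Ff(\mathbf y)$, and by $0$ because $\varphi(w_i1)=\varphi(w_i)\,0$ is in turn followed in $\mathbf y$ by an image of a letter, which begins with $0$. By injectivity these two right-special factors of $\mathbf y$ are distinct. If they have the same length, then $\mathbf y$ has two right-special factors of that length --- impossible, as $\mathbf y$ is Sturmian. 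If not, say $|\varphi(w_1)0|<|\varphi(w_2)0|$: by the chain property $\varphi(w_1)0$ is a suffix of $\varphi(w_2)0$, hence $\varphi(w_1)$ is a suffix of $\varphi(w_2)$. But the nonempty suffixes of $\varphi(w_2)$ beginning with $0$ are precisely the words $\varphi(s)$ with $s$ a suffix of $w_2$ (every other nonempty suffix has the form $1\varphi(s)$, obtained by cutting a block $\varphi(0)=01$); since $\varphi(w_1)$ begins with $0$, injectivity gives $w_1\in\Su(w_2)$, so $w_1=w_2$, a contradiction. Hence $\Ww$ has at most one right-special factor of each length, so it is Sturmian.

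For part~(2) I would not repeat the argument but reduce to part~(1) via the conjugacy identity $\varphi(w)\,0=0\,\tilde\varphi(w)$, valid for every finite word $w$ (both sides are morphisms of $w$ agreeing on the two letters), which passes to the limit as $\varphi(\Ww)=0\,\tilde\varphi(\Ww)$ for every infinite $\Ww$. Assume now $\tilde\varphi(\Ww)$ Sturmian and $\Ww$ beginning with $0$. Then $\tilde\varphi(\Ww)$ begins with $\tilde\varphi(0)=10$, and it has slope $<1/2$ because it contains no factor $11$ (every $1$ in it, coming from an image $\tilde\varphi(0)=10$, is immediately followed by $0$). For such a word, prepending a $0$ again yields a Sturmian word --- a short computation on intercepts --- so $\varphi(\Ww)=0\,\tilde\varphi(\Ww)$ is Sturmian, and part~(1) gives that $\Ww$ is Sturmian. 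This is exactly where the hypothesis ``$\Ww$ begins with $0$'' enters: without it, $\varphi(\Ww)=0\,\tilde\varphi(\Ww)$ may begin with $00$ (for instance $\tilde\varphi(\Ww)=0\Cc$ would give $\varphi(\Ww)=00\Cc$, which is not even balanced), and then part~(1) does not apply --- consistently with the fact that such $\Ww$ need not be Sturmian.

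The step I expect to be the real obstacle is the ``unequal length'' case of part~(1). Two equal-length right-special factors of $\Ww$ have $\varphi$-images in $\mathbf y$ whose lengths differ by the difference of their numbers of $0$'s, so they may land at different lengths and a naive letter-counting argument collapses; one genuinely needs both the chain structure of the right-special factors of the Sturmian word $\mathbf y$ and the fact that $\varphi$, being a suffix code with all images beginning with $0$, reflects the suffix relation. The only non-combinatorial ingredient is the elementary ``prepend a $0$'' lemma used in part~(2), and noticing that this is precisely where the asymmetry between $\varphi$ and $\tilde\varphi$ --- hence the extra hypothesis in part~(2) --- really lives.
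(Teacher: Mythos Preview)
The paper does not supply a proof of this proposition; it is quoted from \cite{ch2acow} (Proposition~2.3.2 there). So there is no ``paper's own proof'' to compare against, and your argument must be judged on its own. It is essentially correct.

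In part~(1) your handling of the unequal-length case is clean: the observation that the nonempty suffixes of $\varphi(w_2)$ beginning with $0$ are exactly the words $\varphi(s)$ with $s\in\Su(w_2)$ is precisely what is needed to reflect the suffix relation back through $\varphi$, and together with the chain property of right-special factors in a Sturmian word it closes the argument.

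In part~(2) the only step left as a promise is the ``prepend a $0$'' lemma. Your intercept computation works for nonsingular $\tilde\varphi(\Ww)$, but you would still owe a word about the singular case. A cleaner route avoids intercepts entirely: Sturmian words are uniformly recurrent, so every prefix $1p$ of $\tilde\varphi(\Ww)$ reoccurs at some positive position, where (since $11\notin\Ff(\tilde\varphi(\Ww))$) it is preceded by $0$; hence $01p\in\Ff(\tilde\varphi(\Ww))$, so $\Ff(0\,\tilde\varphi(\Ww))=\Ff(\tilde\varphi(\Ww))$ and $0\,\tilde\varphi(\Ww)=\varphi(\Ww)$ is Sturmian. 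With this filled in, both parts go through.
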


Given a set $X\subseteq A^{*}$ and a word $w\in X^{*}$, by an abuse of language we say that a factorization $w=x_{1}x_{2}\cdots$, with $x_{i}\in X$ for all $i$, is a \emph{word over the alphabet $X$}, identifying it with the word $x_{1}x_{2}\cdots$ in the free monoid over $X$ (or with the word $f^{-1}(x_{1})f^{-1}(x_{2})\cdots$, where $f$ is a bijection from a new alphabet $B$ to $X$). The same identification is made also for factorizations of infinite words.

A \emph{complete return} to $v\in A^{*}$ is a finite word containing exactly two occurrences of $v$, one as a prefix and one as a suffix. If $w$ is a finite or infinite word and $v$ is a factor of $w$, then a (right) \emph{return word to $v$ in $w$} is a word $r$ such that $rv\in\Ff(w)$ is a complete return to $v$. Left returns can be defined similarly, i.e., replacing $rv$ with $vr$ in the definition. 

The following result is a known consequence of a theorem by~\citet{vuillon}  characterizing Sturmian words as the ones having exactly two return words for each factor:
\begin{prop}[See e.g.~\cite{juvuret}]
\label{thm:deriv}
Let $p$ be a prefix of a Sturmian word $\Ww$.
The sequence of return words to $p$ in $\Ww$ is Sturmian; that is:
\begin{enumerate}
\item $p$ has exactly two distinct return words in $\Ww$, say $u$ and $v$, and 
\item given the morphism $f_{p}:0\mapsto u,1\mapsto v$, the word $\Ww_{p}\in A^{\omega}$ such that $f_{p}(\Ww_{p})=\Ww$ is Sturmian.
\end{enumerate}
\end{prop}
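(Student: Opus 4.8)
The plan is to derive both parts from the characterization of Sturmian words due to \citet{vuillon}: an infinite word is Sturmian if and only if each of its factors has exactly two distinct return words in it. Part~(1) is then immediate, $p$ being a prefix --- hence a factor --- of the Sturmian word $\Ww$. For part~(2) I would show that $\Ww_p$ again satisfies Vuillon's criterion, so that the same theorem yields that $\Ww_p$ is Sturmian. Throughout, write $u\neq v$ for the two return words of $p$, and let $b_0=0<b_1<b_2<\cdots$ be the positions of the occurrences of $p$ in $\Ww$ (so $b_0=0$, as $p$ is a prefix of $\Ww$). Since $u$ and $v$ \emph{exhaust} the return words to $p$, each block $r_j:=\Ww[b_{j-1}..b_j-1]$ lies in $\{u,v\}$; thus $\Ww=r_1r_2\cdots$ is the image under $f_p$ of $\Ww_p=(f_p^{-1}(r_j))_{j\geq1}$, the $b_j$ being precisely the block boundaries $|f_p(\Ww_p[0..i-1])|$, which I will call \emph{breakpoints}. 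One elementary remark, read directly from the definition of a complete return, will be used: neither of $u,v$ is a proper suffix of the other --- such a relation would place a third occurrence of $p$ inside the complete return $up$ or $vp$ --- so that $\{u,v\}$ is a suffix code and $f_p$ is injective.

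Given a nonempty factor $w=w_1\cdots w_k$ of $\Ww_p$, set $\widehat w:=f_p(w)\,p$; it is a factor of $\Ww$, occurring at the breakpoint right after any occurrence of $w$. The crux is a \emph{synchronization lemma}: every occurrence of $\widehat w$ in $\Ww$ is aligned with the block decomposition --- it sits at a breakpoint $b_i$, and the $k$ blocks starting there spell $f_p(w_1),\ldots,f_p(w_k)$, so that $\Ww_p[i..i+k-1]=w$. Equivalently, the occurrences of $\widehat w$ in $\Ww$ correspond bijectively, via $b_i\leftrightarrow i$, to the occurrences of $w$ in $\Ww_p$. Granting this, if $Rw$ is a complete return to $w$ in $\Ww_p$ then $f_p(R)\,\widehat w=f_p(Rw)\,p$ is a complete return to $\widehat w$ in $\Ww$ --- its two occurrences of $\widehat w$ corresponding, under the bijection, to the two occurrences of $w$ in $Rw$ --- and, by the same lemma, every complete return to $\widehat w$ in $\Ww$ is of this form. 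As $f_p$ is injective, $w$ then has exactly as many return words in $\Ww_p$ as $\widehat w$ has in $\Ww$, which is two by Vuillon's theorem applied to $\Ww$; since $w$ was arbitrary, $\Ww_p$ is Sturmian.

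The main obstacle is the synchronization lemma, which I would prove by induction on $k$; the base case $k=0$ (where $\widehat w=p$) is just the definition of the breakpoints. For the step, an occurrence of $\widehat w$ in $\Ww$, say at position $m$, ends with an occurrence of its suffix $p$, hence ends at some breakpoint $b_j$; then $\Ww$ has, ending at $b_j$, both the last block $f_p(w_k)\in\{u,v\}$ of $f_p(w)$ and the return-word block $r_j\in\{u,v\}$. Comparing these two suffixes of $\Ww$ ending at $b_j$: if they have equal length they must coincide; if not, the shorter would be a proper suffix of the longer, against the elementary remark --- so $f_p(w_k)=r_j$, whence this block occupies $[b_{j-1},b_j)$ and $\widehat{w_1\cdots w_{k-1}}=f_p(w_1\cdots w_{k-1})\,p$ occurs at $m$ and ends at $b_{j-1}$. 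The inductive hypothesis then places $m$ at a breakpoint with the correct block alignment, and appending $r_j=f_p(w_k)$ finishes the step. The one point requiring care is that a return word may be as short as, or even shorter than, $p$; the decomposition $\widehat w=f_p(w_1)\cdots f_p(w_k)\,p$ is then peeled off one block at a time from the right, each step pinned down by the remark, and this is exactly the book-keeping the induction performs.
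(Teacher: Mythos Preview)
The paper does not prove this proposition; it is stated as a known consequence of Vuillon's theorem with a reference to~\cite{juvuret}, so there is no in-paper argument to compare against.

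Your proposal is a correct self-contained proof. The strategy---use Vuillon's characterization both ways, transferring return words of a factor $w$ of $\Ww_p$ to return words of $\widehat w=f_p(w)\,p$ in $\Ww$ via a synchronization lemma---is the standard one (this is essentially how derived words are handled in~\cite{juvuret} and related work). The two technical points that make it work are exactly the ones you isolate: (i) $\{u,v\}$ is a suffix code, because a proper-suffix relation would force a third occurrence of $p$ inside a complete return; and (ii) every occurrence of $\widehat w$ in $\Ww$ ends at a breakpoint (its trailing $p$), and the suffix-code property then forces the preceding block to match $f_p(w_k)$, setting up the induction. The bijection between occurrences of $w$ in $\Ww_p$ and occurrences of $\widehat w$ in $\Ww$ then carries complete returns to complete returns, and injectivity of $f_p$ (again from the suffix-code property) gives that $w$ has exactly two return words. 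One small point worth making explicit when you write this up: the bijection on occurrences in $\Ww$ localizes to any factor of the form $f_p(z)\,p$, which is what lets you count occurrences of $\widehat w$ \emph{inside} the candidate complete return $f_p(Rw)\,p$; you use this implicitly when asserting that the latter is a complete return.
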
 

We can now begin to shed light on the structure of RC factorizations.
\begin{prop}
\label{thm:uv}
Let $\Ww=x_{1}x_{2}\cdots$ and $X$ be defined as in Theorem~\ref{thm:main}. Suppose
$X=\{u,v,z\}$ with $z=uv$. Then:
\begin{enumerate}
\item The factorization $\Ww=y_{1}y_{2}\cdots$, obtained from the starting RC factorization by replacing each occurrence of $z$ with $u\cdot v$  (that is, defined so that for all $i$ with $x_{i}=z$, there exists $j$ with $y_{1}\cdots y_{j-1}=x_{1}\cdots x_{i-1}$, $y_{j}=u$, and $y_{j+1}=v$) is also reversible Christoffel.
\item
The new factorization $y_{1}y_{2}\cdots$ is also a Sturmian word on the alphabet $\{u,v\}$.
\end{enumerate}
\end{prop}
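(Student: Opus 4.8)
The plan is to treat the two claims together, since the second follows from the first once we understand the combinatorial structure of where the letter $z=uv$ can appear.

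First I would prove claim (1), namely that $\Ww = y_1 y_2 \cdots$ is again an RC factorization. By construction each $y_i$ is one of $u$, $v$, which are Christoffel words, so the only thing to check is that $\Ww'' := \wt{y_1}\wt{y_2}\cdots$ is Sturmian. The key observation is that splitting $z = uv$ in $\Ww$ corresponds, on the primed side, to splitting $\wt z$; and since $z=uv$ is a Christoffel word written as a product of two Christoffel words, its reversal satisfies $\wt z = \wt v\, \wt u$. So the factorization of $\Ww' = \wt{x_1}\wt{x_2}\cdots$ obtained by replacing each $\wt z$ with $\wt v \cdot \wt u$ gives exactly $\Ww''$. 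The point is then that $\Ww''$ differs from $\Ww'$ only by regrouping letters, so $\Ww'' = \Ww'$ as an infinite word over $A$ — the replacement $z \mapsto u\cdot v$ on the left and $\wt z \mapsto \wt v \cdot \wt u$ on the right does not change the underlying infinite words at all, only their factorizations. Hence $\Ww'' = \Ww'$ is Sturmian, and $(y_k)_{k>0}$ is an RC factorization of $\Ww$ determined by $\Ww'$.

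Next, for claim (2), I would apply Corollary~\ref{thm:bdlr}: since $u$, $v$, and $uv$ are Christoffel words, the morphism $g : 0 \mapsto u$, $1 \mapsto v$ (or its swap, depending on which of $u,v$ is the $0$-image in the relevant orientation) is Sturmian. The factorization $\Ww = y_1 y_2 \cdots$ over the alphabet $\{u,v\}$ corresponds, via the identification described just before this proposition, to an infinite word $\Ww_{uv}$ over $A$ with $g(\Ww_{uv}) = \Ww$. What I want is that $\Ww_{uv}$ is Sturmian. Being a Sturmian morphism guarantees that $g$ \emph{maps} Sturmian words to Sturmian words, but here I need the converse direction — that the preimage of the Sturmian word $\Ww$ under $g$ is Sturmian. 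For this I would decompose $g$ into the generators from~\eqref{eq:stmor} and apply Proposition~\ref{thm:phitil}, which gives exactly the needed "pullback" statements for $\varphi$ and $\tilde\varphi$ (and trivially for $E$); one peels off generators one at a time from the outside, checking at each $\tilde\varphi$-step that the current word starts with $0$, which holds because every $y_i$, being a nontrivial lower (resp.\ upper) Christoffel word, starts with the appropriate letter, consistently across the factorization by Proposition~\ref{thm:uplow}.

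The main obstacle I expect is the bookkeeping in claim (2): one must be careful about the orientation (lower vs.\ upper Christoffel words, hence which generator decomposition and which "starts with $0$" hypothesis applies) and must verify that the specific factorization of $g$ into generators has the property that Proposition~\ref{thm:phitil} can be applied at every step — i.e.\ that no $E$ appears in a position that would flip the leading letter and invalidate a subsequent $\tilde\varphi$-pullback. An alternative, cleaner route that sidesteps the generator decomposition is to use Proposition~\ref{thm:deriv} on return words: if $u$ is a prefix of $\Ww$ (which we may arrange, possibly after noting $\Ww$ begins with $u$ or adjusting), then since in the factorization $\Ww = y_1 y_2 \cdots$ consecutive blocks are $u$ or $v$ with $v = u w$ or $u = v w$ realizing $u,v$ as related by the Christoffel structure, the words $u$ and $v$ are precisely the two return words to a suitable prefix $p$, and Proposition~\ref{thm:deriv}(2) immediately yields that $\Ww_{uv}$ is Sturmian. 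I would present whichever of these two arguments turns out shorter, likely the return-word one, as it localizes the whole difficulty into citing Vuillon's theorem.
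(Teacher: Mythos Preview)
Your argument for claim (1) contains a genuine error. You write that replacing each $\wt z$ in the factorization of $\Ww'$ by $\wt v\cdot\wt u$ ``gives exactly $\Ww''$''. It does not: by definition $\Ww''=\wt{y_1}\wt{y_2}\cdots$, and wherever $x_i=z$ you have $y_j=u$, $y_{j+1}=v$, hence $\wt{y_j}\wt{y_{j+1}}=\wt u\,\wt v$. But $\wt z=\wt v\,\wt u$, and since $u,v$ are distinct primitive words they do not commute, so $\wt u\,\wt v\neq\wt v\,\wt u$. Thus $\Ww''\neq\Ww'$ as infinite words over $A$, and your conclusion that $\Ww''$ is Sturmian because it equals $\Ww'$ collapses. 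In fact the new factorization is \emph{not} the RC factorization of $\Ww$ determined by $\Ww'$; it is determined by a different Sturmian word. The paper handles this by proving claim (2) first, obtaining the Sturmian word $\hat\Ww$ over $\{0,1\}$ with $f(\hat\Ww)=\Ww$, and then observing that $\tilde f(\hat\Ww)$ (with $\tilde f:0\mapsto\wt u,\ 1\mapsto\wt v$) is Sturmian by Corollary~\ref{thm:bdlr}; this $\tilde f(\hat\Ww)$ is precisely your $\Ww''$, and it is the word determining the new RC factorization.

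For claim (2), your return-word route is indeed what the paper does, but your sketch skips over a real difficulty. By Proposition~\ref{thm:cmix} one has either $u\in\Pre(v)$ or $v\in\Su(u)$, and these two cases are not symmetric. When $u\in\Pre(v)$, one checks that $u$ and $v$ are the return words to $u^n$ (for the maximal $n$ with $u^n\in\Pre(v)$) and Proposition~\ref{thm:deriv} applies directly. But when $v\in\Su(u)$, the natural prefix to use is $v^m$, which need not be a prefix of $\Ww$; the paper has to extend $\Ww$ on the left to $v^m\Ww$ and argue separately that this extended word is still Sturmian before invoking Proposition~\ref{thm:deriv}. Your alternative generator-decomposition approach is also not straightforward: Proposition~\ref{thm:phitil} covers only $\varphi$ and $\tilde\varphi$, and tracking the first-letter hypothesis through an arbitrary word in $\{E,\varphi,\tilde\varphi\}$ requires an argument you have not supplied.
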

\begin{proof}
The result is trivially verified when $\{u,v\}=\{0,1\}$. Let us then suppose this is not the case; by Proposition~\ref{thm:cmix} we deduce that either $u$ is a prefix of $v$, or $v$ is a suffix of $u$. Distinguishing such two cases, we shall first prove our second claim, i.e., that the new factorization defines a Sturmian word $\hat\Ww=f^{-1}(y_{1})f^{-1}(y_{2})\cdots$, where $f:0\mapsto u,1\mapsto v$.

\begin{itemize}
\item If $u$ is a prefix of $v$, let $n>0$ be the greatest integer such that $u^{n}$ is a prefix of $v$. Clearly $u^{n}$ is a prefix of $\Ww$; we shall prove that $u$ and $v$ are the return words to $u^{n}$ in $\Ww$, thus showing that $\hat\Ww$ is indeed Sturmian by Proposition~\ref{thm:deriv}.

It is easy to check that $u^{n+1}$ and $vu^{n}$ are indeed factors of $\Ww$.
Since $u$ is primitive, it is clearly a return word to $u^{n}$. Let $v=u^{n}u'$ for some $u'\in A^{*}$. By Proposition~\ref{thm:cmix}, $u'$ and $uu'$ are Christoffel words, so that we have either $\{u,u'\}=\{a,b\}$ or $u'\in\Su(u)$ ($u$ cannot be a prefix of $u'$ by the maximality of $n$).
Clearly $u^{n}$ is a prefix and a suffix of $vu^{n}=u^{n}u'u^{n}$; we need to show that  has no other occurrences. If $\{u,u'\}=A$, this is trivial. If $u'$ is a suffix of $u$, then $u^{n}$ cannot have internal occurrences in $u^{n}u'u^{n}$ because $u$ is unbordered. Therefore $u$ and $v$ are return words to $u^{n}$ in $\Ww$, so that $\hat\Ww$ is Sturmian.

\item If $v$ is a suffix of $u$, let $m>0$ be the greatest integer such that $v^{m}$ is a suffix of $u$. The same argument as above shows that $v^{m}u$ and $v^{m+1}$ are complete returns to $v^{m}$. Thus, writing $u=v'v^{m}$ for some $v'\in A^{*}$, we get that $v$ and $v^{m}v'$ are the return words to $v^{m}$ in the word $v^{m}\Ww$. The sequence of these return words is again determined by $\hat\Ww$, as $v^{m}\Ww=g(\hat\Ww)$ with $g:0\mapsto v^{m}v', 1\mapsto v$. Hence, to prove that $\hat\Ww$ is Sturmian, by Proposition~\ref{thm:deriv} we only need to show that $v^{m}\Ww$ is Sturmian.

By mirroring the argument used in the proof of Proposition~\ref{thm:abcom}, we get that since $\{\Ww,\Ww'\}\neq\{0\Cc,1\Cc\}$, there exist arbitrarily long Abelian equivalent words $r,r'$ such that $r\Ww$ and $r'\Ww'$ are Sturmian. By Theorem~\ref{thm:main}, the set of terms in the RC factorization of $r\Ww$ determined by $r'\Ww'$ cannot be larger than $X=\{u,v,z\}$. Since $u=v'v^{m}$ and $z=v'v^{m+1}$, any sufficiently long $r$ will have $v^{m}$ as a suffix. We have thus proved that $v^{m}\Ww=g(\hat\Ww)$ is Sturmian, and so is $\hat\Ww$.
\end{itemize}
Since $\hat\Ww$ is Sturmian, and the morphism $\tilde f: 0\mapsto\tilde u, b\mapsto\tilde v$ is Sturmian by Corollary~\ref{thm:bdlr}, the word $\tilde f(\hat\Ww)$ is Sturmian too, so that the new factorization is actually the RC factorization of $\Ww$ determined by $\tilde f(\hat\Ww)$. This completes the proof.
\end{proof}

\begin{remark}
If $u$, $v$, and $uv$ are the terms of an RC factorization $\Ww=x_{1}x_{2}\cdots$, then by Proposition~\ref{thm:cmix} exactly one among 
$u^{2}v$ and $uv^{2}$ is a factor of $\Ww$. As a consequence of Proposition~\ref{thm:uv}, it is easy to see that if $u^{2}v$ (resp.~$uv^{2}$) is a factor of $\Ww$, then each occurrence of $v$ (resp.~$u$) in the factorization is preceded by $u$ (resp.~followed by $v$), provided that $x_{1}\neq v$. This gives rise to the following ``converse'' of Proposition~\ref{thm:uv}.

\end{remark}

\begin{prop}
Let $u$, $v$, and $\Ww=x_{1}x_{2}\cdots$ be as above. Replacing in the factorization each occurrence of $u\cdot v$ with one of $z=uv$ (that is, defining $y_{n}$ for $n>0$ so that for all $i$ where $x_{i}=u$ and $x_{i+1}=v$ there exists $j$ with $x_{1}\cdots x_{i-1}=y_{1}\cdots y_{j-1}$ and $y_{j}=z$) produces a new RC factorization of $\Ww$, which is also a Sturmian word (over $\{z,u\}$ or $\{z,v\}$).
\end{prop}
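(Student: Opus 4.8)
The plan is to treat this statement as the inverse of Proposition~\ref{thm:uv}, drawing on the structural information from the preceding remark. By that remark --- whose proviso $x_1\neq v$ is assumed here --- we may suppose that $u^2v\in\Ff(\Ww)$, so that every occurrence of $v$ in the factorization $x_1x_2\cdots$ is immediately preceded by $u$. Consequently, merging each consecutive pair $u\cdot v$ into a single letter $z=uv$ is unambiguous, absorbs every $v$, and produces a factorization $\Ww=y_1y_2\cdots$ all of whose terms are the Christoffel words $u$ and $z$, and which does not begin with a stray $v$ (since $x_1\in\{u,z\}$). The case $uv^2\in\Ff(\Ww)$ is symmetric, with $u$ and $v$ interchanging roles, and yields the alphabet $\{z,v\}$. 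Write $\mathbf Z$ for the sequence $(y_k)_{k>0}$ regarded as an infinite word over the two-letter alphabet $\{u,z\}$; we must show that $\mathbf Z$ is Sturmian and that $\wt{y_1}\wt{y_2}\cdots$ is Sturmian.

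For the first claim, compare $\mathbf Z$ with the \emph{fully split} factorization of Proposition~\ref{thm:uv}: replacing in $\mathbf Z$ every $z$ back by $u\cdot v$ recovers precisely the Sturmian word $\hat\Ww$ over $\{u,v\}$ given by that proposition --- the original $z$'s and the $z$'s introduced by merging both split back to $u\cdot v$, the surviving $u$'s are untouched, and no isolated $v$ is present --- so the combine-then-split operations cancel at the $\{u,v\}$-level. Under the identifications $u\leftrightarrow 0,\ z\leftrightarrow 1$ on $\mathbf Z$ and $u\leftrightarrow 0,\ v\leftrightarrow 1$ on $\hat\Ww$, this splitting is the morphism $0\mapsto 0,\ 1\mapsto 01$, which equals $\varphi\circ E$ for the generators in~\eqref{eq:stmor}. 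Hence $\hat\Ww=\varphi(E(\mathbf Z))$ is Sturmian, so $E(\mathbf Z)$, and therefore $\mathbf Z$, is Sturmian by Proposition~\ref{thm:phitil}(1). In the symmetric case the analogous bookkeeping gives $\hat\Ww=\tilde\varphi(\mathbf Z)$ after the appropriate relabelling, and one uses part~(2) of Proposition~\ref{thm:phitil}, which applies precisely because the merged factorization begins with $z$ (once more thanks to $x_1\neq v$).

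It remains to check that $y_1y_2\cdots$ is an RC factorization of $\Ww$, i.e.\ that $\wt{y_1}\wt{y_2}\cdots$ is Sturmian. With the identification above this word is $\tilde g(\mathbf Z)$, where $\tilde g\colon 0\mapsto\wt u,\ 1\mapsto\wt z$. Now $u$, $z$, and $u^2v$ are all Christoffel words (the last by Proposition~\ref{thm:cmix}(1)), hence so are their reversals $\wt u$, $\wt z$, and $\wt z\,\wt u=\widetilde{u^2v}$; by Corollary~\ref{thm:bdlr} the morphism $\tilde g$ is therefore Sturmian, and applying it to the Sturmian word $\mathbf Z$ shows that $\wt{y_1}\wt{y_2}\cdots$ is Sturmian. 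Thus $(y_k)_{k>0}$ is an RC factorization of $\Ww$, realized through $\mathbf Z$ as a Sturmian word over $\{z,u\}$ (over $\{z,v\}$ in the symmetric case), as required. The one genuinely delicate point is the reduction of the first paragraph: one must invoke the preceding remark, and its hypothesis $x_1\neq v$, to be certain that merging consecutive $u\cdot v$'s leaves a word over only $\{u,z\}$ (resp.~$\{v,z\}$) and to know which of $u^2v$, $uv^2$ is a factor of $\Ww$ --- hence which single-letter term survives; the remaining ingredients (the morphism identities and the Christoffel property of $\widetilde{u^2v}$) are routine verifications.
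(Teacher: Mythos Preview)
Your proof is correct and follows essentially the same route as the paper's: invoke the preceding Remark to see that the merged factorization lives over a two-letter alphabet, relate it to the Sturmian word $\hat\Ww$ of Proposition~\ref{thm:uv} via $\varphi$ (resp.~$\tilde\varphi$) and apply Proposition~\ref{thm:phitil}, then conclude the RC property from Corollary~\ref{thm:bdlr}. The only differences are cosmetic labelings (the paper sets $0\mapsto z,\,1\mapsto u$, avoiding your extra $E$) and that you are more explicit than the paper about the proviso $x_1\neq v$, which is indeed needed both to ensure the merged alphabet is $\{u,z\}$ in the first case and to guarantee the initial letter hypothesis of Proposition~\ref{thm:phitil}(2) in the second.
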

\begin{proof}
Let us first assume $u^{2}v\in\Ff(\Ww)$.
By the previous Remark, it is clear that the replacement yields a factorization of $\Ww$ where $v$ does not appear.
Hence, the new factorization can be seen as an infinite word on the alphabet $\{u,z\}$, i.e., the image of some word $\bar\Ww\in A^{\omega}$ under the morphism $h:0\mapsto z, 1\mapsto u$. To see that $\bar\Ww$ is Sturmian, by Proposition~\ref{thm:phitil} it suffices to observe that $\varphi(\bar\Ww)=\hat\Ww$, where
$\hat\Ww$ is given by Proposition~\ref{thm:uv} and $\varphi$ is the Fibonacci morphism as in~\eqref{eq:stmor}.

To show that this new factorization is actually RC, once again we just need to observe that it is obtained by Abelian comparison with the word $\tilde h(\bar\Ww)$, where the morphism $\tilde h:0\mapsto\tilde z, 1\mapsto\tilde u$ is Sturmian by Corollary~\ref{thm:bdlr}.

Now suppose $uv^{2}\in\Ff(\Ww)$ instead. Essentially the same argument as above applies; we consider the new factorization as the image of an infinite word $\bar\Ww$ under the morphism $h:0\mapsto z, 1\mapsto v$. We have $(E\circ\tilde\varphi)(\bar\Ww)=\hat\Ww$, and $\bar\Ww$ starts with 0, so that $\bar\Ww$ is Sturmian by Proposition~\ref{thm:phitil}; the factorization is RC since it is obtained by Abelian comparison with $\tilde h(\bar\Ww)$, where $\tilde h:0\mapsto\tilde z, 1\mapsto\tilde v$ is Sturmian by Corollary~\ref{thm:bdlr}.
\end{proof}

We recall (cf.~\cite{ArBeMaPe}) that a \emph{3-iet word} is an infinite word coding the orbit of a point $\rho$ under a \emph{three-interval exchange transformation} $T$. More precisely, given an interval $I=[0,\ell]\subseteq\mathbb R$ containing $\rho$ and subdivided in three intervals $I_{a}=[0,\alpha]$, $I_{b}=[\alpha,\alpha+\beta]$, and $I_{c}=[\alpha+\beta,\ell]$, we let $T$ be the piecewise linear transformation of $I$ exchanging the three subintervals according to the permutation $(321)$, i.e., let $T:I\to I$ be defined by $T(\xi)=\xi+t_{x}$ if $\xi\in I_{x}$, where $x\in\{a,b,c\}$ and
\[t_{a}=	\ell-\alpha,\quad t_{b}=\ell-2\alpha-\beta,\quad\text{ and }
t_{c}=-\alpha-\beta\;.\]
The 3-iet word determined by $\alpha$, $\beta$, $\ell$, and $\rho$ is then the infinite word indexed over $\Nn$ whose $n$-th letter is $x\in\{a,b,c\}$ if $T^{n}(\rho)\in I_{x}$.

Let $\sigma,\sigma':\{a,b,c\}^{*}\to A^{*}$ be morphisms defined by
\begin{equation}
\label{eq:sigma}
\sigma(a)=0=\sigma'(a),\quad \sigma(b)=01,\;\sigma'(b)=10,\quad \sigma(c)=1
=\sigma'(c)\;.
\end{equation}
The following result was proved in \cite[Theorem~A]{ArBeMaPe}:
\begin{thm}[See~\citet{ArBeMaPe}]
\label{thm:arbemape}
An infinite word $\mathbf u$ on the alphabet $\{a,b,c\}$, whose letters have positive frequencies, is an aperiodic 3-iet word if and only if $\sigma(\mathbf u)$ and $\sigma'(\mathbf u)$ are Sturmian words.
\end{thm}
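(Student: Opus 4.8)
I would prove this geometric characterization by making explicit the classical description of a symmetric three-interval exchange as an induced (first-return) map of an irrational rotation. The plan is to first check that the symmetric 3-iet $T$ on $[0,\ell)$ with $|I_a|=\alpha$, $|I_b|=\beta$, $|I_c|=\gamma=\ell-\alpha-\beta$ coincides with the first-return map to $J=[0,\ell)$ of the rotation $R\colon x\mapsto x+(\ell-\alpha)$ on the circle $\mathbb R/(\ell+\beta)\mathbb Z$: a short case analysis shows the return time to $J$ takes only the values $1$ and $2$ (the removed arc $[\ell,\ell+\beta)$ is shorter than the step $\ell-\alpha$), that the return-time-$2$ region is exactly $I_b$, and that the resulting maps on $I_a,I_b,I_c$ are the translations $t_a,t_b,t_c$. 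Under this identification, $\sigma$ is precisely the operation of ``unfolding'' a $T$-itinerary into the full $R$-itinerary with respect to the two-interval partition $P_0=[0,\alpha+\beta)$, $P_1=[\alpha+\beta,\ell+\beta)$: a letter $a$ or $c$ contributes one $R$-step, landing in $P_0$ resp.~$P_1$, whereas a letter $b$ contributes two $R$-steps, the first from $I_b\subseteq P_0$ (reading $0$) and the second from the removed arc $\subseteq P_1$ (reading $1$); the morphism $\sigma'$ does the same with $P_0'=[\ell,\ell+\beta)\cup[0,\alpha)$, $P_1'=[\alpha,\ell)$, i.e., with the opposite convention at the interval endpoints. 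The key point is that $P_1$ and $P_1'$ are arcs of length $\ell-\alpha$, which equals the rotation angle, so that a coding of $R$ by either of them is a mechanical word of slope $(\ell-\alpha)/(\ell+\beta)$, hence Sturmian.

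Granting this dictionary, the forward implication is almost immediate: if $\mathbf u$ is an aperiodic 3-iet word, then $\sigma(\mathbf u)$ and $\sigma'(\mathbf u)$ are the codings of a single $R$-orbit with respect to the partitions $(P_0,P_1)$ and $(P_0',P_1')$, hence Sturmian; aperiodicity forces $(\ell-\alpha)/(\ell+\beta)$ irrational, so these words are not eventually periodic.

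For the converse, suppose $\mathbf s:=\sigma(\mathbf u)$ and $\mathbf s':=\sigma'(\mathbf u)$ are Sturmian. Since replacing $01$ by $10$ does not change letter counts, they have the same slope $\theta$, so after a common adjustment of intercept both can be realised as codings of one orbit $(\{n\theta+\psi\})_{n\geq0}$ of $R_\theta$ on $\mathbb R/\mathbb Z$ by arcs $J_1$, $J_1'$ of length $\theta$. The positions where $\mathbf s$ and $\mathbf s'$ disagree are exactly those covered by the $\sigma$-images of the $b$'s of $\mathbf u$; spelling out what $\sigma(b)=01$ versus $\sigma'(b)=10$ impose on two consecutive positions — and using that $\mathbf s'$, too, is balanced and that $a,c$ occur with positive frequency — forces $J_1$ and $J_1'$ to be translates of each other by an amount $\eta'$ with $0<\eta'<\min(\theta,1-\theta)$, and, inspecting the very first letters, forces $\psi$ to lie in $[0,1-\eta')$. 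The numbers $\theta$ and $\eta'$ then determine, on the normalised circle, positive parameters $\beta=\eta'$, $\alpha=1-\theta-\eta'$, $\gamma=\theta-\eta'$, hence a genuine symmetric 3-iet $T$ with $J=[0,1-\eta')$; by the construction of $\sigma,\sigma'$, the $T$-itinerary of $\rho:=\psi\in J$ has the same pair of images $(\mathbf s,\mathbf s')$ as $\mathbf u$. Because the pair of morphisms $(\sigma,\sigma')$ is injective on $\{a,b,c\}^\omega$ — where the images agree a $0$ decodes to $a$ and a $1$ to $c$, where they disagree the pattern $01$ versus $10$ decodes to $b$ — this itinerary must equal $\mathbf u$, and since $\theta$ is irrational $R_\theta$ and hence $T$ is minimal, so $\mathbf u$ is an aperiodic 3-iet word.

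The hard part is the bookkeeping at the interval endpoints: one must check that $\sigma$ and $\sigma'$ really do realise the two natural endpoint conventions for the same rotation and partition shape — the analogue here of the distinction between lower and upper mechanical words — and, in the converse, extract the 3-iet data from $\theta$ and $\eta'$ while excluding the degenerate configurations. It is exactly there that the hypotheses are consumed: ``positive letter frequencies'' gives $\alpha,\beta,\gamma>0$ (in particular $b$ really occurs, so $\mathbf s\neq\mathbf s'$ and $\eta'>0$), while ``aperiodic'' is equivalent to the reconstructed rotation being irrational, which is what keeps the 3-iet from collapsing and its coding a pair of genuine Sturmian words rather than eventually periodic sequences.
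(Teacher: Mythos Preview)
The paper does not prove this theorem: it is quoted verbatim as Theorem~A of \citet{ArBeMaPe} and used as a black box in the proof of Corollary~\ref{thm:rc3iet}. There is therefore no ``paper's own proof'' to compare your attempt against.

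That said, your sketch follows the standard route and is essentially the argument of the cited source: realise the symmetric $3$-iet as the first-return map to $J=[0,\ell)$ of the rotation by $\ell-\alpha$ on $\mathbb R/(\ell+\beta)\mathbb Z$, identify $\sigma$ and $\sigma'$ with the two natural codings of that rotation by an arc of length equal to the angle (the lower/upper mechanical conventions), and for the converse reconstruct the $3$-iet parameters from the common slope and the offset between the two Sturmian codings. Your forward direction is clean; the case analysis you give for the induced map is correct (return time $2$ exactly on $I_b$, with the translations matching $t_a,t_b,t_c$). The converse is where your write-up is thinnest: the claim that the disagreement set of $\mathbf s$ and $\mathbf s'$ forces $J_1$ and $J_1'$ to be translates by some $\eta'\in(0,\min(\theta,1-\theta))$ needs a genuine argument (it is here that the positive-frequency hypothesis for $a$ and $c$ is really used to rule out $\eta'\geq\theta$ or $\eta'\geq 1-\theta$), and you should say explicitly why both $\mathbf s$ and $\mathbf s'$ can be realised as codings of the \emph{same} orbit of the \emph{same} rotation---this is not automatic from ``same slope'' alone. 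These are exactly the points where the proof in \cite{ArBeMaPe} does the real work, so if you intend this as more than a sketch you would need to fill them in.
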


As a consequence, we get that RC factorizations are in general 3-iet words:
\begin{cor}
\label{thm:rc3iet}
Let $\Ww$ be a Sturmian word, and $\Ww=x_{1}x_{2}\cdots$ be an RC factorization with $X=\{x_{n}\mid n>0\}=\{u,v,z\}$, $z=uv$. If every word of $X$ occurs more than once in the factorization, then $x_{1}x_{2}\cdots$ is an aperiodic 3-iet word over the alphabet $\{u,z,v\}$.
\end{cor}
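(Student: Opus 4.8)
The plan is to verify the hypotheses of Theorem~\ref{thm:arbemape}. Regard the factorization $x_1x_2\cdots$ as an infinite word $\mathbf t$ over the three‑letter alphabet $\{a,b,c\}$, via the bijection $a\leftrightarrow u$, $b\leftrightarrow z$, $c\leftrightarrow v$; this is the identification under which the relation $z=uv$ matches $\sigma(b)=01=\sigma(a)\sigma(c)$. It then suffices to prove (i) that each of $a,b,c$ has positive frequency in $\mathbf t$, and (ii) that $\sigma(\mathbf t)$ and $\sigma'(\mathbf t)$ are both Sturmian words: Theorem~\ref{thm:arbemape} will then immediately give that $\mathbf t$, i.e.\ $x_1x_2\cdots$ read over $\{u,z,v\}$, is an aperiodic 3‑iet word.

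For (ii) I would simply unwind the definitions. By construction $\sigma$ sends each $a$ to $0$, each $c$ to $1$, and each $b$ to $01$; but this is exactly the recoding used in the proof of Proposition~\ref{thm:uv} to pass from $\Ww$ to $\hat\Ww$, namely ``replace each $z$ by $u\cdot v$, then apply $f^{-1}$ with $f\colon 0\mapsto u,\,1\mapsto v$''. Hence $\sigma(\mathbf t)=\hat\Ww$, which is Sturmian. For $\sigma'$, observe that $\Ww'=\wt{x_1}\wt{x_2}\cdots$ is itself Sturmian and that $(\wt{x_k})$ is an RC factorization of $\Ww'$ with set of terms $\{\tilde u,\tilde v,\tilde z\}$, whose longest element is $\tilde z=\wt{uv}=\tilde v\,\tilde u$. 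Applying Proposition~\ref{thm:uv} to $\Ww'$ — with the roles of its ``$u$'' and ``$v$'' played by $\tilde v$ and $\tilde u$ — produces a Sturmian word, and the same unwinding shows that this word is exactly $E(\sigma'(\mathbf t))$: indeed $E\circ\sigma'$ sends $a\mapsto1$, $c\mapsto0$, $b\mapsto01$, which is the recoding ``replace each $\tilde z$ by $\tilde v\cdot\tilde u$, then apply the inverse of $0\mapsto\tilde v,\,1\mapsto\tilde u$''. Since $E$ merely exchanges the two letters, $\sigma'(\mathbf t)$ is Sturmian. (The case $\{u,v\}=\{0,1\}$ is immediate: there $\sigma(\mathbf t)=\Ww$ and $\sigma'(\mathbf t)=\Ww'$ up to $E$. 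If one prefers not to read these identities off the definitions, they follow by applying to both sides the injective Sturmian morphism $f$, resp.\ the injective morphism $0\mapsto\tilde u,\,1\mapsto\tilde v$.)

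For (i): since $|X|=3$, each of $a,b,c$ occurs in $\mathbf t$, so the point is to see that none occurs only finitely often and that the three frequencies are positive. This is where the hypothesis ``every element of $X$ occurs more than once'' enters, to rule out the degenerate configurations — typically $\{\Ww,\Ww'\}=\{\tilde p01\Cc,\tilde p10\Cc\}$ with $p\in\Pre(\Cc)$, for which a direct check gives $X=\{0,1,01\}$ with $01$ occurring exactly once — in which the factorization is eventually over two letters only. Away from these, one argues as in the proof of Theorem~\ref{thm:main}: passing to a far enough suffix $x_{j+1}x_{j+2}\cdots$ (which is still an RC factorization, of a nonsingular suffix of $\Ww$, determined by the corresponding suffix of $\Ww'$, and has the same term frequencies as $\mathbf t$) one reduces to $\Ww=\Ss_{\alpha,\rho}$, $\Ww'=\Ss_{\alpha,\rho'}$ both lower (resp.\ both upper) mechanical; then by the analysis of Case~2 there, the $i$‑th term has length $n_{i+1}-n_i$, where $(n_i)$ enumerates $\{n\in\Nn:\{n\alpha+\rho\}\in I_2\}$, and by the equidistribution of $(n\alpha+\rho)_n$ modulo $1$ (equivalently, the quantitative three gap theorem) each of the two or three gap values occurs with a well‑defined positive frequency; a Christoffel factor of $\Ww$ of prescribed length being unique, so does each of $u,v,z$, hence each of $a,b,c$ in $\mathbf t$.

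Combining (i) and (ii), Theorem~\ref{thm:arbemape} applies and yields the claim. I expect the only real difficulty to be part (i): making the reduction to the mechanical case fully rigorous, and in particular checking that ``every element of $X$ occurs more than once'' removes exactly the degenerate (essentially proximal) configurations in which some Christoffel term would occur only finitely often, so that the equidistribution argument governs all the remaining cases. Step (ii), by contrast, is routine once one is careful with the coding conventions against Proposition~\ref{thm:uv}.
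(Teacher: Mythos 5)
Your overall strategy is exactly the paper's: recode the factorization as an infinite word over $\{a,b,c\}$ and verify the two hypotheses of Theorem~\ref{thm:arbemape}, namely positive letter frequencies and Sturmian-ness of the two images under $\sigma$ and $\sigma'$. Your part (ii) matches the paper's, which simply asserts $\sigma(\dot{\Ww})=\hat\Ww$ and $\sigma'(\dot{\Ww})=\hat\Ww'$ with $\hat\Ww,\hat\Ww'$ the Sturmian words produced by Proposition~\ref{thm:uv} applied to the RC factorizations of $\Ww$ and of $\Ww'$; your extra bookkeeping with $E$ (forced by $\wt{uv}=\tilde v\tilde u$, which swaps the roles of the two short terms on the $\Ww'$ side) is correct, if anything more careful than the paper's one-line claim, and harmless since $E$ preserves Sturmian words.

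The gap is in part (i), and you have flagged it yourself. Two concrete problems. First, your reduction to ``the analysis of Case~2'' of Theorem~\ref{thm:main} only covers the configuration $\alpha>\beta$; when $\alpha\le\beta$ one has $X=\{0,0^{h}1,0^{h+1}1\}$, the terms are not delimited by the sequence $(n_i)$ you describe, and yet this is a legitimate instance of the corollary. Second, and more importantly, the implication you leave unproved --- that the hypothesis forces each gap value, hence each term, to occur with positive frequency --- is the entire content of part (i); a priori one of the three gap values in the three gap theorem can occur exactly once, and this is not confined to the proximal pairs $\{\tilde p01\Cc,\tilde p10\Cc\}$ you single out. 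The paper closes this uniformly, with no case split inherited from Theorem~\ref{thm:main}: for each $\gamma\in X$, the positions of the occurrences of $\gamma$ \emph{in the factorization} are exactly the $n$ with $\{n\alpha+\rho\}$ in a set $J_\gamma$ that is an intersection of two arcs of the circle (namely $I_\gamma\cap(I_{\wt\gamma}-\rho'+\rho)$, suitably modified when $\gamma=0$). Since the orbit of an irrational rotation meets any single point at most once, two occurrences of $\gamma$ force $J_\gamma$ to have nonempty interior, whence bounded gaps between visits and positive frequency. That is the precise way ``occurs more than once'' enters, and it is the step your proposal does not supply; it also requires checking that every occurrence of $\gamma$ at a position in $J_\gamma$ really is a term of the factorization, which the paper does by a short argument on where lower Christoffel words can end.
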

\begin{proof}
Let $\tau:\{a,b,c\}\to\{u,v,z\}$ be defined by $\tau(a)=u$, $\tau(b)=z$, and $\tau(c)=v$. We need to show that the infinite word
$\dot\Ww:=\tau^{-1}(x_{1})\tau^{-1}(x_{2})\cdots$ is a 3-iet word. 
Clearly, exchanging the roles of $\Ww$ and $\Ww'$ (and letting $\tau(a)=\tilde u$ etc.) does not change $\dot\Ww$; therefore, without loss of generality, in the following we can assume by Proposition~\ref{thm:uplow} that $X$ is made of lower Christoffel words.


Let $\alpha$ be the slope of $\Ww$ and $\Ww'$, and let $\rho$ and $\rho'$ respectively be their intercepts. As is well-known, any factor $\gamma$ of $\Ww$ corresponds to an interval
$I_\gamma$ on the unit circle, i.e., $\gamma$ occurs at position $n$ in $\Ww$ if and only if $\{n\alpha+\rho\}\in I_{\gamma}$; moreover, since $\Ww$ and $\Ww'$ have the same slope, the positions of $\tilde\gamma$ in $\Ww'$ are identified by
\[\{n\alpha+\rho'\}\in I_{\tilde\gamma}\iff \{n\alpha+\rho\}\in
I_{\tilde\gamma}-\rho'+\rho\]
where $I_{\tilde\gamma}-\rho'+\rho$ is a translation on the unit circle (i.e., the sum is taken modulo 1).

Let now $\gamma\in X$, and suppose first that $|\gamma|>1$.
The relation
\begin{equation}
\label{eq:occur}
\{n\alpha+\rho\}\in I_{\gamma}\cap(I_{\wt\gamma}-\rho'+\rho)
\end{equation}
identifies the positions $n$ of all occurrences of $\gamma$ in $\Ww$ such that $\tilde\gamma$ occurs at the same position in $\Ww'$. As $|\gamma|>1$, we have $\gamma=0q1$ and $\tilde\gamma=1q0$ for some word $q$. We claim that all occurrences of $\gamma$ whose position satisfies \eqref{eq:occur} appear in the RC factorization. Indeed, if one of such occurrences did not correspond to
$x_{i}$ for any $i\geq 1$, then the first $0$ of $0q1$ (resp.~$1$ of $1q0$) would have to be the last letter of some $x_{j}$ (resp.~$\wt{x_{j}}$), against the fact that all $x_{j}$'s are lower Christoffel words. 

Now suppose $\gamma$ is a letter. Without loss of generality, we can assume
$\alpha<1/2$. If $\gamma=1$, then necessarily $X=\{0,1,01\}$, so that~\eqref{eq:occur} again identifies exactly all occurrences of $\gamma$ in $\Ww$ that appear in the RC factorization. Let then $\gamma=0$, so that $X$ 
is $\{0,0^{n}1,0^{n+1}1\}$ for some $n\geq 0$. As a consequence of the balance condition, it is easy to see that a position $n$ where $0$ occurs in both $\Ww$ and $\Ww'$ corresponds to $x_{i}$ for some $i\geq 1$ if and only if it is followed in $\Ww$ by $x_{i+1}=0^{n}1$. Hence, such positions are exactly those that satisfy
\begin{equation}
\label{eq:occ0}
\{n\alpha+\rho\}\in I_{0^{n+1}1}\cap(I_{010^{n}}-\rho'+\rho)\;.
\end{equation}

In all cases and for all $\gamma\in X$, we have identified intervals corresponding to the occurrences of $\gamma$ in the RC factorization, namely the ones in~\eqref{eq:occur} or~\eqref{eq:occ0}. By hypothesis, $\gamma$ occurs at least twice in the factorization; by the irrationality of $\alpha$, such intervals must then have nonempty interior, so that the gaps between consecutive integers $n$ satisfying~\eqref{eq:occur} or~\eqref{eq:occ0} are bounded. Thus, every term in the RC factorization, and so every letter in $\dot\Ww$, occurs with positive frequency.

By Theorem~\ref{thm:arbemape}, it remains to prove that the two words $\sigma(\dot\Ww)$, $\sigma'(\dot\Ww)$ are Sturmian, where $\sigma$ and $\sigma'$ are the morphisms defined in~\eqref{eq:sigma}.
In fact, it is easy to check that $\sigma(\dot\Ww)=\hat\Ww$ and
$\sigma'(\dot\Ww)=\hat\Ww'$, i.e., the words $\sigma(\dot\Ww)$ and $\sigma'(\dot\Ww)$ coincide with the Sturmian words obtained in the proof of Proposition~\ref{thm:uv}, applied respectively to the RC factorizations $\Ww=x_{1}x_{2}\cdots$ and
$\Ww'=\widetilde{x_{1}}\widetilde{x_{2}}\cdots$. The result follows.
\end{proof}
\begin{remark}
The hypothesis that every word of $X$ occurs more than once in the factorization is necessary. For instance, it is easy to see that for the RC factorization $x_{1}x_{2}\cdots x_{n}\cdots$ of $\Ww=010\mathbf f$ determined by $\Ww'=1001\mathbf f$, one has $X=\{0,01,001\}$, but the term $0$ occurs exactly once, as $x_{2}$. In fact, by Proposition~\ref{thm:uv}, the word $\mathbf f=x_{3}x_{4}\cdots$ is Sturmian over the alphabet $\{01,001\}$.
\end{remark}

\section{Future work}

We believe that much can still be said about reversible Christoffel factorizations; for example, it would be interesting to characterize the set of terms $X$ in terms of the slope $\alpha$ and the difference between the intercepts, in particular to distinguish when the cardinality of $X$ is 2 or 3.
\subsection*{Acknowledgment}
The second author would like to thank the Department of Mathematics at the University of Turku for its support during his part-time employment there in 2011--2012 as a member of the FiDiPro unit.
\bibliographystyle{model1-num-names}

\begin{thebibliography}{17}
\expandafter\ifx\csname natexlab\endcsname\relax\def\natexlab#1{#1}\fi
\providecommand{\bibinfo}[2]{#2}
\ifx\xfnm\relax \def\xfnm[#1]{\unskip,\space#1}\fi
\bibitem[{Barge and Diamond(2002)}]{bardi}
\bibinfo{author}{M.~Barge}, \bibinfo{author}{B.~Diamond},
\newblock \bibinfo{title}{Coincidence for substitutions of {Pisot} type},
\newblock \bibinfo{journal}{Bull. Soc. Math. France} \bibinfo{volume}{130}
  (\bibinfo{year}{2002}) \bibinfo{pages}{619--626}.
\bibitem[{Bucci et~al.(2011)Bucci, De~Luca, and Zamboni}]{rufidim}
\bibinfo{author}{M.~Bucci}, \bibinfo{author}{A.~De~Luca},
  \bibinfo{author}{L.~Q. Zamboni},
\newblock \bibinfo{title}{Reversible {Christoffel} factorizations},
\newblock in: \bibinfo{booktitle}{First Russian-Finnish Symposium on Discrete
  Mathematics (RuFiDiM)}, \bibinfo{publisher}{Euler International Mathematical
  Institute}, \bibinfo{address}{St. Petersburg, Russia}, \bibinfo{year}{2011},
  pp. \bibinfo{pages}{13--15}.
\bibitem[{Morse and Hedlund(1940)}]{mhsturm}
\bibinfo{author}{M.~Morse}, \bibinfo{author}{G.~A. Hedlund},
\newblock \bibinfo{title}{Symbolic dynamics {II. Sturmian} trajectories},
\newblock \bibinfo{journal}{Amer. J. Math.} \bibinfo{volume}{62}
  (\bibinfo{year}{1940}) \bibinfo{pages}{1--42}.
\bibitem[{Berstel and de~Luca(1997)}]{bersdelu}
\bibinfo{author}{J.~Berstel}, \bibinfo{author}{A.~de~Luca},
\newblock \bibinfo{title}{Sturmian words, {Lyndon} words and trees},
\newblock \bibinfo{journal}{Theoret.\ Comput.\ Sci.} \bibinfo{volume}{178}
  (\bibinfo{year}{1997}) \bibinfo{pages}{171--203}.
\bibitem[{de~Luca(1997)}]{delustr}
\bibinfo{author}{A.~de~Luca},
\newblock \bibinfo{title}{Sturmian words: structure, combinatorics, and their
  arithmetics},
\newblock \bibinfo{journal}{Theoret.\ Comput.\ Sci.} \bibinfo{volume}{183}
  (\bibinfo{year}{1997}) \bibinfo{pages}{45--82}.
\bibitem[{Berstel and S\'e\'ebold(2002)}]{ch2acow}
\bibinfo{author}{J.~Berstel}, \bibinfo{author}{P.~S\'e\'ebold},
\newblock \bibinfo{title}{Sturmian words},
\newblock in: \bibinfo{editor}{M.~Lothaire} (Ed.),
  \bibinfo{booktitle}{Algebraic Combinatorics on Words},
  \bibinfo{publisher}{Cambridge University Press}, \bibinfo{address}{Cambridge
  UK}, \bibinfo{year}{2002}. \bibinfo{note}{Chapter 2}.
\bibitem[{Berstel et~al.(2008)Berstel, Lauve, Reutenauer, and Saliola}]{blrs}
\bibinfo{author}{J.~Berstel}, \bibinfo{author}{A.~Lauve},
  \bibinfo{author}{C.~Reutenauer}, \bibinfo{author}{F.~Saliola},
  \bibinfo{title}{Combinatorics on Words: Christoffel Words and Repetition in
  Words}, volume~\bibinfo{volume}{27} of \textit{\bibinfo{series}{CRM monograph
  series}}, \bibinfo{publisher}{American Mathematical Society},
  \bibinfo{year}{2008}.
\bibitem[{Bucci et~al.(2011)Bucci, Puzynina, and Zamboni}]{BuPuZa}
\bibinfo{author}{M.~Bucci}, \bibinfo{author}{S.~Puzynina},
  \bibinfo{author}{L.~Q. Zamboni}, \bibinfo{title}{Central sets defined by
  words of low factor complexity}, \bibinfo{year}{2011}.
  \bibinfo{note}{ArXiv:1110.4225}.
\bibitem[{Yasutomi(1999)}]{yasutomi}
\bibinfo{author}{S.~Yasutomi},
\newblock \bibinfo{title}{On {S}turmian sequences which are invariant under
  some substitutions},
\newblock in: \bibinfo{booktitle}{Number theory and its applications ({K}yoto,
  1997)}, volume~\bibinfo{volume}{2} of \textit{\bibinfo{series}{Dev. Math.}},
  \bibinfo{publisher}{Kluwer Acad. Publ.}, \bibinfo{address}{Dordrecht},
  \bibinfo{year}{1999}, pp. \bibinfo{pages}{347--373}.
\bibitem[{Berth{{\'e}} et~al.(2007)Berth{{\'e}}, Ei, Ito, and Rao}]{BeEiItRa}
\bibinfo{author}{V.~Berth{{\'e}}}, \bibinfo{author}{H.~Ei},
  \bibinfo{author}{S.~Ito}, \bibinfo{author}{H.~Rao},
\newblock \bibinfo{title}{On substitution invariant {S}turmian words: an
  application of {R}auzy fractals},
\newblock \bibinfo{journal}{Theor. Inform. Appl.} \bibinfo{volume}{41}
  (\bibinfo{year}{2007}) \bibinfo{pages}{329--349}.
\bibitem[{Slater(1950)}]{slater}
\bibinfo{author}{N.~Slater},
\newblock \bibinfo{title}{The distribution of the integers $n$ for which
  $\{\theta n\}< \phi$},
\newblock \bibinfo{journal}{Proc. Cambridge Philos. Soc.} \bibinfo{volume}{46}
  (\bibinfo{year}{1950}) \bibinfo{pages}{525--534}.
\bibitem[{S\'os(1958)}]{sosd}
\bibinfo{author}{V.~S\'os},
\newblock \bibinfo{title}{On the distribution mod 1 of the sequence $n\alpha$},
\newblock \bibinfo{journal}{Ann. Univ. Sci. Budapest, E\"otv\"os Sect. Math}
  \bibinfo{volume}{1} (\bibinfo{year}{1958}) \bibinfo{pages}{127--134}.
\bibitem[{Alessandri and Berth{\'e}(1998)}]{ab3d}
\bibinfo{author}{P.~Alessandri}, \bibinfo{author}{V.~Berth{\'e}},
\newblock \bibinfo{title}{Three distance theorems and combinatorics on words},
\newblock \bibinfo{journal}{Enseign. Math. (2)} \bibinfo{volume}{44}
  (\bibinfo{year}{1998}) \bibinfo{pages}{103--132}.
\bibitem[{Berth\'e et~al.(2008)Berth\'e, de~Luca, and Reutenauer}]{berthedlr}
\bibinfo{author}{V.~Berth\'e}, \bibinfo{author}{A.~de~Luca},
  \bibinfo{author}{C.~Reutenauer},
\newblock \bibinfo{title}{On an involution of {C}hristoffel words and
  {S}turmian morphisms},
\newblock \bibinfo{journal}{European J. Combin.} \bibinfo{volume}{29}
  (\bibinfo{year}{2008}) \bibinfo{pages}{535--553}.
\bibitem[{Vuillon(2001)}]{vuillon}
\bibinfo{author}{L.~Vuillon},
\newblock \bibinfo{title}{A characterization of {Sturmian} words by return
  words},
\newblock \bibinfo{journal}{European J. Combin.} \bibinfo{volume}{22}
  (\bibinfo{year}{2001}) \bibinfo{pages}{263--275}.
\bibitem[{Justin and Vuillon(2000)}]{juvuret}
\bibinfo{author}{J.~Justin}, \bibinfo{author}{L.~Vuillon},
\newblock \bibinfo{title}{Return words in {Sturmian} and episturmian words},
\newblock \bibinfo{journal}{Theor. Inform. Appl.} \bibinfo{volume}{34}
  (\bibinfo{year}{2000}) \bibinfo{pages}{343--356}.
\bibitem[{Arnoux et~al.(2008)Arnoux, Berth{\'e}, Mas\'akov\'a, and
  Pelantov\'a}]{ArBeMaPe}
\bibinfo{author}{P.~Arnoux}, \bibinfo{author}{V.~Berth{\'e}},
  \bibinfo{author}{Z.~Mas\'akov\'a}, \bibinfo{author}{E.~Pelantov\'a},
\newblock \bibinfo{title}{Sturm numbers and substitution invariance of 3iet
  words},
\newblock \bibinfo{journal}{Integers} \bibinfo{volume}{8}
  (\bibinfo{year}{2008}) \bibinfo{pages}{A14, 17}.

\end{thebibliography}

\end{document}